\newtheorem{theorem}{Theorem}
\newtheorem{lemma}[theorem]{Lemma}
\theoremstyle{definition}
\newtheorem{definition}[theorem]{Definition}
\newtheorem{remark}[theorem]{Remark}
\newtheorem{cond}{Condition}
\newtheorem{ass}{Assumption}
\DeclareMathOperator{\dom}{dom}
\newcommand{\Ao}{\mathbf{A}}
\newcommand{\Mo}{\mathbf{M}}
\newcommand{\Mall}{\mathbf{M}}
\newcommand{\Wo}{\mathbf{W}}
\newcommand{\N}{\mathbb{N}}
\newcommand{\R}{\mathbb{R}}
\newcommand{\relaxed}{\mathcal{B}}
\newcommand{\strict}{\mathcal{A}}
\newcommand{\reg}{\mathcal{R}}
\newcommand{\bregman}{\mathcal{D}}
\newcommand{\qreg}{\mathcal{Q}}
\newcommand{\la}{\lambda}
\newcommand{\La}{\Lambda}
\newcommand{\Om}{\Omega}
\newcommand{\ka}{\kappa}
\newcommand{\al}{\alpha}
\newcommand{\etaw}{\eta}
\newcommand{\data}{y}
\newcommand{\signal}{x}
\newcommand{\hid}{h}
\newcommand{\X}{\mathbb X}
\newcommand{\Y}{\mathbb Y}
\newcommand{\Hr}{\mathbb H}
\newcommand{\edot}{\,\cdot \, }
\newcommand{\supp}{\operatorname{supp}}
\newcommand{\Span}{\operatorname{span}}
\newcommand{\sign}{\operatorname{sign}}
\newcommand{\ran}{\operatorname{ran}}
\newcommand{\id}{\operatorname{Id}}
\newcommand{\argmin}{\operatorname{argmin}}
\newcommand\abs[1]{\left\vert#1\right\vert}
\newcommand\norm[1]{{\left\Vert#1\right\Vert}}
\newcommand\snorm[1]{\Vert#1\Vert}
\newcommand{\enorm}{{\left\Vert \cdot \right\Vert}}
\newcommand\inner[2]{\langle#1,#2\rangle}
\newcommand\sinner[2]{\langle#1,#2\rangle}
\colorlet{lred}{red!40}
\colorlet{lgreen}{green!40}
\colorlet{lblue}{blue!40}
\numberwithin{equation}{section}
\numberwithin{theorem}{section}
\title{Convergence rates for the joint solution of inverse problems with compressed sensing data}
\date{May 15, 2022}
\author{Andrea Ebner}
\affil{Department of Mathematics, University of Innsbruck\authorcr
Technikerstrasse 13, 6020 Innsbruck, Austria
 \authorcr E-mail:  \texttt{andrea.ebner@uibk.ac.at}
 }
\author{Markus Haltmeier}
\affil{Department of Mathematics, University of Innsbruck\authorcr
Technikerstrasse 13, 6020 Innsbruck, Austria
 \authorcr E-mail:  \texttt{markus.haltmeier@uibk.ac.at}
 }
\begin{document}

\maketitle

\begin{abstract}
Compressed sensing (CS) is a powerful tool for reducing the amount of data to be collected while maintaining high spatial resolution. Such techniques work well in practice and at the same time are supported by solid theory. Standard CS results assume measurements to be made directly on the targeted signal. In many practical applications, however, CS information can only be  taken from indirect data  $\hid_\star = \Wo \signal_\star$ related to the original signal by an additional forward operator. If inverting the forward operator is ill-posed, then existing CS theory is not applicable. In this paper, we address this issue and present two joint reconstruction approaches, namely relaxed $\ell^1$ co-regularization and strict $\ell^1$ co-regularization,  for  CS from indirect  data. As main results, we derive error estimates for recovering $\signal_\star$ and  $\hid_\star$.  In particular, we derive a linear convergence rate in the norm for the latter. To obtain these results, solutions are required to satisfy a source condition and the CS measurement operator is required to satisfy a restricted injectivity condition. We further show that these conditions are not only sufficient but even necessary to obtain linear convergence.

\medskip\noindent \textbf{Keywords:} Compressed sensing from indirect data, joint recovery, inverse problems, regularization, convergence rate, sparse recovery  

\end{abstract}

\section{Introduction}

Compressed sensing  (CS) allows to  significantly reduce the amount of  measurements while keeping high spatial resolution \cite{candes2006robust,donoho2006compressed,fourcat13CS}.  In mathematical  terms,  CS requires recovering a targeted   signal $\signal_\star \in \X$  from data  $\data^\delta = \Mo \signal_\star + z^\delta $. Here $\Mo \colon  \X \to \Y$ is the CS measurement operator, $\X$, $\Y$ are Hilbert spaces and $z^\delta \in \Y$ is the unknown data perturbation with $\snorm{z^\delta} \leq \delta$.  CS theory shows that  even when the measurement operator is severely under-determinated one can derive linear  error estimates   
$ \snorm{ \signal^\delta  - \signal_\star} = \mathcal O(\delta)  $ for the CS reconstruction $\signal^\delta$.  Such results can  be derived uniformly for all sparse $\signal_\star \in \X$  assuming  the restricted isometry property  (RIP) requiring that $\snorm{\Mo \signal_1 - \Mo \signal_2} \asymp \snorm{\signal_1 - \signal_2} $ for sufficiently sparse elements \cite{candes2006stable}. The  RIP is known to be satisfied  with high probability for a wide range of random matrices~\cite{baraniuk2008simple}.   Under a restricted injecticity  condition, related results for elements satisfying a range condition are derived  in \cite{Gr08,Gr11}. In \cite{fuchs2004sparse} a strong  form of the source condition has been shown to be sufficient and necessary for the uniqueness of $\ell^1$ minimization. In \cite{Gr11} it is shown that the RIP implies  the source condition and the restricted injectivity  for all sufficiently sparse elements.

\subsection{Problem formulation}
 
In many applications, CS measurements can only be made on indirect data  $\hid_\star = \Wo \signal_\star$ instead of the  targeted  signal $\signal_\star \in \X$, where $\Wo \colon  \X \to \Hr$ is the forward operator coming from a specific application at hand. For example, in  computed tomography,  the forward operator  is the Radon transform, and in  microscopy, the forward operator is  a convolution operator. The  problem of recovering  $\signal_\star \in \X$ from CS measurements of indirect data becomes 
\begin{equation}\label{eq:csi}
	\text{Recover $\signal_\star$ from } \data^\delta = \Ao \Wo \signal_\star + z^\delta \,,
\end{equation}
where $\Ao \colon \Hr \to \Y$ is the  CS measurement operator. In this paper we study the  stable  solution of \eqref{eq:csi}.   
 
The naive reconstruction approach is a single-step approach to consider \eqref{eq:csi} as standard CS problem with the  composite measurement operator $\Mo = \Ao \Wo$. However,  CS recovery conditions (such as the RIP) are not expected  to hold  for the composite operator $\Ao \Wo$ due to the ill-posedness of the operator $\Wo$.  As an alternative one may use a two-step approach where one first solves the CS problem of recovering $\Wo \signal_\star$ and afterwards inverts the operator equation of the inverse problem.  Apart from the additional effort, both recovery  problems need to be regularized and the risk of error propagation is high. Moreover, recovering $\hid_\star \in \Hr$ from sparsity alone suffers from increased non-uniqueness  if  $\overline{\ran(\Wo)} \subsetneq \Hr$.

\subsection{Proposed $\ell^1$ co-regularization}

In order to overcome the drawbacks of the single-step and the two-step approach, we introduce two joint reconstruction methods  for solving \eqref{eq:csi}  using a weighted $\ell^1$ norm $\snorm{\edot}_{1,\kappa}$ (defined in \eqref{eq:norm}) addressing  the CS part and variational regularization with an additional penalty $\reg$ for addressing the inverse problem part. More precisely, we study the  following two regularization approaches.       
\begin{enumerate}[label=(\alph*)]
\item \textsc{Strict  $\ell^1$ co-regularization:} 
Here we construct  a regularized solution pair $(\signal_\al^\delta, \hid_\al^\delta)$ with $\hid_\al^\delta = \Wo \signal_\al^\delta$ by minimizing       
\begin{equation}\label{eq:strict} 
	\strict_{\al,\data^\delta} (\signal) \coloneqq  \frac{1}{2} \snorm{\Ao\Wo \signal -\data^\delta}^2 
	+ \al \bigl( \reg(\signal) + \snorm{\Wo \signal}_{1,\kappa} \bigr)  \,,
\end{equation}
where $\al > 0$ is a regularization parameter. This is equivalent  to minimizing    
$ \snorm{\Ao \hid -\data^\delta}^2/2  + \al ( \reg(\signal) + \snorm{\hid}_{1,\kappa} )$ under the strict constraint $\hid = \Wo \signal$.
    
\item\textsc{Relaxed $\ell^1$ co-regularization:} 
Here we relax  the constraint $\hid = \Wo \signal$  by adding  a penalty and construct  a regularized solution $(\signal_\al^\delta, \hid_\al^\delta)$ by minimizing              
\begin{equation}\label{eq:relax}
\relaxed_{\al,\data^\delta} (\signal,\hid) \coloneqq \frac{1}{2} \snorm{\Wo \signal - \hid}^2 + \frac{1}{2} \snorm{ \Ao \hid - \data^\delta }^2  \\ 
+ \al \bigl(  \reg(\signal) + \snorm{\hid}_{1,\kappa} \bigr) \,.
\end{equation}
The relaxed version in particular allows some defect between $\Wo \signal_\al^\delta$ and $\hid_\al^\delta$.   
\end{enumerate}
Under standard  assumptions, both the strict and the relaxed version provide convergent regularization methods \cite{scherzer2009variational}.

\subsection{Main results}

As main results of this paper, under  the parameter choice $\alpha \asymp \delta$, we  derive the linear convergence rates  (see Theorems~\ref{thm:strict},~\ref{thm:relaxed}) 
\begin{align*}
\bregman_{\xi}^{\reg}  (\signal_\al^\delta, \signal_\star)  &= \mathcal O(\delta)  \quad \text{ as } \delta \to 0 \\
\snorm{ \hid_\al^\delta  -  \Wo \signal_\star} &= \mathcal O(\delta) \quad \text{ as } \delta \to 0 \,,
\end{align*} 
where  $\bregman_{\xi}^{\reg}$ is the Bregman distance with respect  to $\reg$ and $\xi$  (see Definition~\ref{def:bregman}) for  strict as well as for  relaxed $\ell^1$ co-regularization.  In order to archive these results, we assume a restricted injectivity condition for  $\Ao$ and  source conditions for  $\signal_\star$ and  $\Wo \signal_\star$. These above error estimates are optimal in the sense that  they cannot be improved  even  in the cases where $\Ao = \id$, which  corresponds to an inverse problem only, or the  case   $\Wo = \id$ where   \eqref{eq:csi} is a standard CS problem on direct data.   

As further main  result we  derive converse results, showing that the source condition and the restricted injectivity condition are also necessary to obtain linear convergence rates (see Theorem~\ref{thm:converse}).

We note  that our results and analysis closely follow  \cite{Gr08,Gr11}, where the  source condition and restricted injectivity are shown to be necessary and sufficient for linear convergence of  $\ell^1$-regularization for CS on direct data. In that context, one considers CS as a particular instance of an inverse problem under a sparsity prior using variational regularization with an $\ell^1$-penalty (that is, $\ell^1$-regularization). Error estimates in the norm distance for $\ell^1$-regularization  based on the source condition have been first derived in \cite{lorenz2008convergence} and strengthened  in \cite{Gr08}.  In the finite dimensional setting, the source condition (under a different name) for $\ell^1$-regularization has been used previously in \cite{fuchs2004sparse}. For some more recent development of $\ell^1$-regularization and source conditions see \cite{flemming2018variational}. 

Further note that for the direct CS problem where $\Wo=\id$  is the identity operator and if we take  the regularizer  $\reg=\snorm{\edot}^2/2$, then the strict $\ell^1$ co-regularization reduces to the well known elastic net  regression model \cite{Zo05}.  Closely following the work \cite{Gr08}, error estimates for elastic net  regularization have been derived in \cite{jin2009elastic}.  Finally, we note that another interesting line of research in the context of $\ell^1$ co-regularization would be the derivation of  error estimates under the RIP. While we expect  this to be possible, such an analysis is  beyond the  scope of this work.

\section{Linear convergence rates}\label{sec:rate}

Throughout this paper $\X, \Y$ and $\Hr$ denote separable Hilbert spaces with inner product  $\sinner{\edot}{\edot}$ and norm $\snorm{\edot}$.   Moreover we make the following assumptions.

\begin{ass} \mbox{} \label{ass:main}
\begin{enumerate}[label=(A.\arabic*),leftmargin=1cm]
\item\label{ass:main2} $\Wo \colon \X  \to  \Hr$ is linear and bounded. 
\item\label{ass:main1} $\Ao \colon \Hr \to \Y$ is   linear and bounded.
\item\label{ass:main3}  $\reg \colon \X \to [0,\infty]$ is  proper, convex and wlsc.
\item\label{ass:main4}  $\La$ is a countable index set.
\item\label{ass:main5}  $(\phi_\la)_{\la \in \La} \in \Hr^\La$ is an orthonormal basis (ONB) for $\Hr$.
\item\label{ass:main7}  $(\ka_\la)_{\la \in \La} \in [a, \infty)^\La$ for some $a > 0$.
\item\label{ass:main8}  $\exists \signal \in  \X \colon \reg(\signal) + \sum_{\la \in \La} \ka_\la \abs{\inner{\phi_\la}{\Wo\signal}} < \infty$.
\end{enumerate}
\end{ass}

Recall that $\reg$ is wlsc (weakly lower semi-continuous) if $ \liminf_{k \to \infty} \reg (\signal_k ) \geq \reg( \signal)$ for all     $(\signal_k)_{k\in \N} \in \X^\N$ weakly  converging to  $\signal \in \X$.  We write    $\ran(\Wo)  \coloneqq \{\Wo \signal \mid \signal \in \X \}$ for the range of $\Wo$ and   
$$\supp(\hid) \coloneqq \{\la \in \La \mid \inner{\phi_\la}{\hid} \neq 0 \} $$  for the support of $\hid \in \Hr$ with respect to  $(\phi_\la)_{\la \in \La}$. 
A signal $\hid  \in \Hr$ is sparse if $\abs{\supp(\hid)} < \infty$. 
The weighted $\ell^1$-norm $\snorm{\edot}_{1,\kappa} \colon \Hr \to [0, \infty]$ with  weights $(\ka_\la)_{\la \in \La} $ is defined by    
\begin{equation} \label{eq:norm}
	\snorm{\hid}_{1,\kappa}  
	 \coloneqq  \sum_{\la \in \La} \ka_\la \abs{\inner{\phi_\la}{\hid}} \,.
\end{equation}
We have $\dom(\snorm{\edot}_{1,\kappa}) = \{\hid \in \Hr \mid \sum_{\la \in \La} \ka_\la \abs{\inner{\phi_\la}{\hid}} < \infty \}$.
For a finite subset of indices $\Omega \subseteq \La$, we write  
\begin{align}
	&\Hr_\Om \coloneqq \Span\{\phi_\la \mid \la \in  \Omega \}  \\
	&\mathrm{i}_{\Omega} \colon \Hr_\Om \to \Hr \colon  h \mapsto h\\
	&\Ao_{\Omega} \coloneqq \Ao \circ \, \mathrm{i}_{\Omega} \colon \Hr_\Om \to \Y \\
	&\pi_{\Omega} \colon  \Hr \to \Hr_\Om  \colon \hid \mapsto \sum_{\la \in \Omega} \inner{\phi_\la}{x}\phi_\la  \,.
\end{align}
Because  $(\phi_\la)_{\la \in \La} \subseteq \Hr$ is an ONB,  every  $\hid \in \Hr$ has the basis representation $ \hid=\sum_{\la \in \La} \inner{\phi_\la}{\hid}\phi_\la$.  Finally,   $\norm{\Ao}$ denotes the standard operator norm.

\subsection{Auxiliary estimates} \label{sec:aux}

One main ingredient  for our results are error estimates for general variational regularization in terms of the Bregman distance.
Recall that  $\xi \in \X$  is called   subgradient  of a functional $\qreg \colon  \X \to [0, \infty]$ at $\signal_\star \in \X$ if 
$$
	 \forall \signal \in \X \colon \quad   
	 \qreg(\signal) \geq \qreg(\signal_\star) + \inner{\xi}{\signal - \signal_\star} \,.
$$
The set of all subgradients is called the subdifferential of $\qreg $ at $\signal_\star$ and denoted by $\partial \qreg(\signal_\star)$.  

\begin{definition}[Bregman distance] \label{def:bregman}
Given  $\qreg \colon  \X \to [0, \infty]$ and $\xi \in \partial \qreg(\signal_\star)$,  the  Bregman distance between $\signal_\star, \signal \in \X$  with respect  to $\qreg$ and $\xi$ is defined  by 
\begin{equation}
	\bregman_{\xi}^\qreg(\signal,\signal_\star) \coloneqq \qreg(\signal) - \qreg(\signal_\star) - \inner{\xi}{\signal-\signal_\star} \,.
\end{equation}
 \end{definition}
 
The Bregman distance is a valuable tool for deriving error estimates for variational regularization. Specifically, for  our purpose we use the following convergence rates result.

\begin{lemma}[Variational regularization] \label{lem:var}
Let  $\Mall \colon \X \to \Y$ be bounded and linear,  let $\qreg \colon  \X \to [0, \infty]$ be proper, convex and wlsc and let $(\signal_\star, \data_\star) \in \X \times \Y$ satisfy $\Mall \signal_\star = \data_\star$ and  $\Mall^* \eta \in \partial \qreg(\signal_\star)$ for some $\eta \in \Y$. Then  for all $\delta, \al > 0$, $\data^\delta \in \Y$ with  $\snorm{\data^\delta - \data_\star} \leq \delta$ and  $\signal_\al^\delta \in \argmin \{ \snorm{\Mall \signal - \data^\delta}^2/2 + \al \qreg(\signal)\}$ we have 
\begin{align} \label{eq:var1}
\snorm{\Mall \signal_\al^\delta - \data^\delta } & \leq \delta + 2 \al \snorm{\eta}   
\\ \label{eq:var2}
D_{\Mall^* \eta}^{\qreg} (\signal_\al^\delta,\signal_\star) & \leq  (\delta + \al \snorm{\eta})^2  / (2 \al)  \,.
\end{align}
\end{lemma}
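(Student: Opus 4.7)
The plan is to use minimality of $\signal_\al^\delta$ against the comparison point $\signal_\star$ and then absorb the linear term produced by the source condition into the discrepancy term via Young's inequality. Concretely, I would first write
\[
\tfrac{1}{2}\snorm{\Mall \signal_\al^\delta - \data^\delta}^2 + \al \qreg(\signal_\al^\delta) \leq \tfrac{1}{2}\snorm{\Mall \signal_\star - \data^\delta}^2 + \al \qreg(\signal_\star) \leq \tfrac{1}{2}\delta^2 + \al \qreg(\signal_\star),
\]
using $\Mall \signal_\star = \data_\star$ and $\snorm{\data^\delta - \data_\star}\leq \delta$. The defining identity of the Bregman distance rewrites $\qreg(\signal_\al^\delta) = \qreg(\signal_\star) + \inner{\Mall^*\eta}{\signal_\al^\delta - \signal_\star} + \bregman^\qreg_{\Mall^*\eta}(\signal_\al^\delta,\signal_\star)$, so moving the $\qreg(\signal_\star)$ terms around and exploiting $\inner{\Mall^*\eta}{\signal_\al^\delta - \signal_\star} = \inner{\eta}{\Mall\signal_\al^\delta - \data_\star}$ yields
\[
\tfrac{1}{2}\snorm{\Mall \signal_\al^\delta - \data^\delta}^2 + \al \bregman^\qreg_{\Mall^*\eta}(\signal_\al^\delta,\signal_\star) \leq \tfrac{1}{2}\delta^2 - \al \inner{\eta}{\Mall \signal_\al^\delta - \data_\star}.
\]

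Next I would split $\Mall \signal_\al^\delta - \data_\star = (\Mall \signal_\al^\delta - \data^\delta) + (\data^\delta - \data_\star)$ and apply Cauchy--Schwarz, giving the master inequality
\[
\tfrac{1}{2}r^2 + \al D \leq \tfrac{1}{2}\delta^2 + \al\snorm{\eta}\,r + \al\snorm{\eta}\,\delta,
\]
where I have abbreviated $r \coloneqq \snorm{\Mall \signal_\al^\delta - \data^\delta}$ and $D \coloneqq \bregman^\qreg_{\Mall^*\eta}(\signal_\al^\delta,\signal_\star) \geq 0$. From this one inequality both bounds follow by elementary manipulation. Dropping the non-negative term $\al D$ and completing the square gives $(r-\al\snorm{\eta})^2 \leq (\delta+\al\snorm{\eta})^2$, which implies \eqref{eq:var1}. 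Applying instead the Young estimate $\al\snorm{\eta}\,r \leq r^2/2 + \al^2\snorm{\eta}^2/2$ cancels the $r^2/2$ on the left and leaves $\al D \leq (\delta+\al\snorm{\eta})^2/2$, which is \eqref{eq:var2}.

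There is no real obstacle here: the result is the textbook rate result of Burger--Osher type, and everything reduces to the minimality comparison, the subgradient inequality, and a completion of the square. The only point to be mindful of is the direction of the inner-product manipulation when transferring $\Mall^*\eta$ from $\X$ to $\Y$, so that the resulting linear term is expressed purely in terms of the residual $\Mall\signal_\al^\delta - \data^\delta$ and the noise level $\delta$ before Cauchy--Schwarz is applied.
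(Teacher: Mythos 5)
Your proof is correct: the minimality comparison, the subgradient/Bregman identity with $\inner{\Mall^*\eta}{\signal_\al^\delta-\signal_\star}=\inner{\eta}{\Mall\signal_\al^\delta-\data_\star}$, the Cauchy--Schwarz split of the residual, and the two ways of closing the master inequality (completing the square for \eqref{eq:var1}, Young's inequality for \eqref{eq:var2}) all check out and reproduce the stated constants exactly. Note that the paper itself does not prove this lemma but only cites \cite[Lemma 3.5]{Gr11}; the argument you give is precisely the standard Burger--Osher-type derivation underlying that reference, so there is no substantive difference in approach.
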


\begin{proof}
Lemma \ref{lem:var} has been derived in   \cite[Lemma 3.5]{Gr11}. Note  that error estimates  for variational regularization  in the Bregman distance have first been derived in~\cite{burger2004convergence}.   
\end{proof}

For our purpose we  will apply Lemma \ref{lem:var} where $\qreg$ is a combination formed  by $\reg $ and $\snorm{\edot}_{1,\kappa}$.  We will use that the subdifferential of  $\snorm{\edot}_{1,\kappa}$ at $\hid_\star$ consists of  all $\eta = \sum_{\la \in \La} \eta_\la \phi_\la \in \Hr$ with  
\begin{equation*}
	\begin{cases}
	\eta_\la = \ka_\la \sign(\inner{\phi_\la}{\hid_\star}) 
	& \text{ for } \la \in \supp(\hid_\star)
	\\
	 \eta_\la \in [-\ka_\la,\ka_\la] 
	 & \text{ for } \la \notin \supp(\hid_\star)  \,. 
 \end{cases}
\end{equation*}  
Since the family  $(\eta_\la)_{\la \in \La}$ is square summable, $\eta_\la = \pm \ka_\la $ can be obtained for only finitely many $\la$  and therefore  $\partial \snorm{\hid_\star}_{1,\kappa}$ is nonempty if and only if $\hid_\star$ is sparse.

\begin{remark}[Weighted $\ell^1$-norm]
For $\eta = \sum_{\la \in \La} \eta_\la \phi_\la \in \partial \snorm{\hid_\star}_{1,\kappa} $  define
\begin{align}
\Omega[\eta] &\coloneqq \{ \la \in \La \colon \abs{\eta_\la} = \ka_\la \}
\\
m[\eta] & \coloneqq \min \{ \ka_\la - \abs{\eta_\la} \colon \la \notin \Omega[\eta] \}.
\end{align}
Then $\Omega[\eta]$ is  finite and as $(\eta_\la)_{\la \in \La}$ converges to zero, $m[\eta]$ is well-defined with $m[\eta] > 0$.
Because $\snorm{\edot}_{1,\kappa}$ is positively homogeneous it  holds  $\snorm{\hid_\star}_1 = \inner{\eta}{\hid_\star}$.  Thus,  for $\hid \in \Hr$,
\begin{align}
\bregman_\eta^{\snorm{\edot}_{1,\kappa}}(\hid,\hid_\star) &= \snorm{\hid}_{1,\kappa} - \inner{\eta}{\hid} \notag \\
& = \sum_{\la \in \La} (\ka_\la \abs{\inner{\phi_\la}{\hid}}- \eta_\la \inner{\phi_\la}{\hid}) \notag \\
& \geq  \sum_{\la \in \La} (\ka_\la- \abs{\eta_\la}) \abs{\inner{\phi_\la}{\hid}} \notag \\
& \geq  m[\eta] \sum_{\la \notin \Omega[\eta]}  \abs{\inner{\phi_\la}{\hid}}  \,. \label{eq:lower}
\end{align}
Estimate \eqref{eq:lower} implies that if $\bregman_\eta^{\snorm{\edot}_{1,\kappa}}(\hid_\al^\delta,\hid_\star)$ linearly converge  to $0$, then so does  $\sum_{\la \notin \Omega[\eta]}  \abs{\inner{\phi_\la}{\hid^\delta_\al}}$.
\end{remark}

\begin{lemma} \label{normbound}
Let $\Om \subseteq \La$ be finite, $\Ao_\Om \colon \Hr_\Om \to \Y$ injective and $\hid_\star \in \Hr_\Omega$. Then, for all $\hid  \in \Hr$,
\begin{equation} \label{eq:normbound}
\snorm{\hid-\hid_\star} \leq \snorm{\Ao^{-1}_\Omega} \snorm{\Ao \hid - \Ao \hid_\star} 
+ (1+ \snorm{\Ao^{-1}_\Omega} \snorm{\Ao} )   \sum_{\la \notin \Omega} \abs{\inner{\phi_\la}{\hid}}  \,.
\end{equation}
\end{lemma}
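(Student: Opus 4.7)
The plan is to split $\hid$ along $\Omega$ using the orthogonal projection $\pi_\Omega$, apply the triangle inequality, and use the injectivity of $\Ao_\Omega$ on the finite-dimensional component while exploiting the elementary inequality $\snorm{\edot} \leq \snorm{\edot}_1$ on the complement.

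More concretely, I would write $\hid - \hid_\star = (\pi_\Omega \hid - \hid_\star) + (\hid - \pi_\Omega \hid)$, noting that $\pi_\Omega \hid - \hid_\star$ lies in $\Hr_\Omega$ since $\hid_\star \in \Hr_\Omega$ by assumption. For the tail part, the ONB expansion gives
\begin{equation*}
\snorm{\hid - \pi_\Omega \hid}^2 = \sum_{\la \notin \Omega} \abs{\inner{\phi_\la}{\hid}}^2 \leq \Bigl( \sum_{\la \notin \Omega} \abs{\inner{\phi_\la}{\hid}} \Bigr)^2,
\end{equation*}
so that $\snorm{\hid - \pi_\Omega \hid} \leq \sum_{\la \notin \Omega} \abs{\inner{\phi_\la}{\hid}}$. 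For the on-support part, since $\Ao_\Omega$ is injective on the finite-dimensional space $\Hr_\Omega$ (hence a bounded isomorphism onto its range), applying $\snorm{\Ao_\Omega^{-1}}$ gives
\begin{equation*}
\snorm{\pi_\Omega \hid - \hid_\star} \leq \snorm{\Ao_\Omega^{-1}} \snorm{\Ao(\pi_\Omega \hid - \hid_\star)}.
\end{equation*}

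Then I would rewrite $\Ao(\pi_\Omega \hid - \hid_\star) = (\Ao \hid - \Ao \hid_\star) - \Ao(\hid - \pi_\Omega \hid)$ and apply the triangle inequality together with $\snorm{\Ao(\hid - \pi_\Omega \hid)} \leq \snorm{\Ao}\, \snorm{\hid - \pi_\Omega \hid}$ and the tail bound from above. Combining everything yields
\begin{equation*}
\snorm{\hid - \hid_\star} \leq \snorm{\Ao_\Omega^{-1}} \snorm{\Ao \hid - \Ao \hid_\star} + \bigl(1 + \snorm{\Ao_\Omega^{-1}} \snorm{\Ao}\bigr) \sum_{\la \notin \Omega} \abs{\inner{\phi_\la}{\hid}},
\end{equation*}
which is exactly \eqref{eq:normbound}.

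There is no real obstacle here; the only point worth flagging is that $\snorm{\Ao_\Omega^{-1}}$ is well-defined because $\Hr_\Omega$ is finite-dimensional, so injectivity of $\Ao_\Omega$ automatically upgrades to a bounded inverse on $\ran(\Ao_\Omega)$. The two ingredients — the identity $\ran(\mathrm{i}_\Omega) = \Hr_\Omega$ containing $\hid_\star$, and the crude $\ell^2 \leq \ell^1$ bound on the off-support coefficients — are what make the argument clean.
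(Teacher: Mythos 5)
Your proof is correct and follows essentially the same route as the paper: the decomposition via $\pi_\Omega$, the bound $\snorm{\pi_\Omega \hid - \hid_\star} \leq \snorm{\Ao_\Omega^{-1}}\snorm{\Ao_\Omega(\pi_\Omega \hid - \hid_\star)}$, the rewriting $\Ao(\pi_\Omega \hid - \hid_\star) = \Ao(\hid - \hid_\star) - \Ao \pi_{\La\setminus\Omega}\hid$, and the final $\ell^2 \leq \ell^1$ estimate on the off-support coefficients all match the paper's argument. Your remark that finite-dimensionality of $\Hr_\Omega$ is what makes $\snorm{\Ao_\Omega^{-1}}$ well-defined is also exactly the point the paper makes at the outset of its proof.
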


\begin{proof}
Because $\Hr_\Omega$ is finite dimensional and $\Ao_\Om$ is injective, the  inverse $\Ao^{-1}_\Om \colon \ran(\Ao_\Om) \to \Hr_\Omega$ is well defined and  bounded. Consequently,  
\begin{align*}
\snorm{\hid-\hid_\star} & \leq  \snorm{\pi_\Om \hid - \hid_\star} + \snorm{\pi_{\La \setminus \Omega} \hid} \\
&\leq  \snorm{\Ao^{-1}_\Omega} \snorm{\Ao_\Om (\pi_\Om \hid - \hid_\star)} + \snorm{\pi_{\La \setminus \Omega} \hid} \\
&\leq  \snorm{\Ao^{-1}_\Omega} \snorm{\Ao(\hid - \hid_\star) - \Ao \pi_{\La \setminus \Omega} \hid} + \snorm{\pi_{\La \setminus \Omega} \hid} \\
&\leq  \snorm{\Ao^{-1}_\Omega} \snorm{\Ao \hid - \Ao \hid_\star} 
+ (1+ \snorm{\Ao^{-1}_\Omega} \snorm{\Ao} ) \snorm{\pi_{\La \setminus \Omega} \hid} 
\,.
\end{align*}
Bounding the $\ell^2$-norm   by the $\ell^1$-norm yields \eqref{eq:normbound}.   
\end{proof}

\begin{lemma}\label{lem:normbound2}
Let $\hid_\star \in \Hr$ be sparse, $\eta \in \partial \snorm{\hid_\star}_{1,\kappa}$ and assume that $\Ao_{\Omega[\eta]} \colon \Hr_{\Omega[\eta]} \to \Y$ is injective. Then, for $\hid \in \Hr$,
\begin{equation} \label{eq:normbound2}
\snorm{\hid-\hid_\star} \leq \snorm{\Ao^{-1}_{\Omega[\eta]}} \snorm{\Ao \hid - \Ao \hid_\star} 
+   \frac{1+ \snorm{\Ao^{-1}_{\Omega[\eta]}} \snorm{\Ao} }{m[\eta]} \bregman_\eta^{\snorm{\edot}_{1,\kappa}}(\hid,\hid_\star) \,.
\end{equation}
\end{lemma}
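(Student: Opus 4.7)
The plan is to simply chain the previous lemma with the lower bound \eqref{eq:lower} on the Bregman distance already derived in the remark.

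First I would verify that Lemma~\ref{normbound} applies with the choice $\Omega = \Omega[\eta]$. Since $\hid_\star$ is sparse, the set $\Omega[\eta]$ is finite, and by hypothesis $\Ao_{\Omega[\eta]}$ is injective. It remains to check that $\hid_\star \in \Hr_{\Omega[\eta]}$. This follows from the explicit form of the subdifferential of $\snorm{\edot}_{1,\kappa}$ recalled above: for $\la \in \supp(\hid_\star)$ we must have $\eta_\la = \ka_\la \sign(\inner{\phi_\la}{\hid_\star})$, hence $\abs{\eta_\la} = \ka_\la$, so $\la \in \Omega[\eta]$. Thus $\supp(\hid_\star) \subseteq \Omega[\eta]$ and $\hid_\star \in \Hr_{\Omega[\eta]}$.

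Applying Lemma~\ref{normbound} then yields
\begin{equation*}
\snorm{\hid-\hid_\star} \leq \snorm{\Ao^{-1}_{\Omega[\eta]}} \snorm{\Ao \hid - \Ao \hid_\star}
+ (1+ \snorm{\Ao^{-1}_{\Omega[\eta]}} \snorm{\Ao})   \sum_{\la \notin \Omega[\eta]} \abs{\inner{\phi_\la}{\hid}}.
\end{equation*}
To finish, I would invoke the inequality \eqref{eq:lower} from the remark, which gives the Bregman lower bound
\begin{equation*}
m[\eta]  \sum_{\la \notin \Omega[\eta]} \abs{\inner{\phi_\la}{\hid}} \leq \bregman_\eta^{\snorm{\edot}_{1,\kappa}}(\hid,\hid_\star).
\end{equation*}
Since $m[\eta] > 0$ by the remark, dividing and substituting produces exactly \eqref{eq:normbound2}.

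There is no substantive obstacle here; the only point requiring care is the observation that $\supp(\hid_\star) \subseteq \Omega[\eta]$, so that Lemma~\ref{normbound} is applicable with $\Omega = \Omega[\eta]$. Everything else is a mechanical substitution of the two previously established inequalities.
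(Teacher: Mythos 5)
Your proof is correct and follows exactly the route the paper takes: the paper's own proof is the one-line "Follows from \eqref{eq:normbound}, \eqref{eq:lower}", and you have simply spelled out the two substitutions, including the (worthwhile) check that $\supp(\hid_\star) \subseteq \Omega[\eta]$ so that Lemma~\ref{normbound} applies with $\Omega = \Omega[\eta]$.
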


\begin{proof}
Follows form   \eqref{eq:normbound}, \eqref{eq:lower}. 
\end{proof}

\subsection{Relaxed $\ell^1$ co-regularization} \label{sec:relaxed}

First we derive linear rates for the relaxed model  $\relaxed_{\al,\data^\delta}$.  These results will be derived under the following  condition.

\begin{cond} \mbox{}\label{ass:relaxed}
\begin{enumerate}[label=(1.\arabic*),leftmargin=1cm]
\item\label{ass:relaxed1}  $(\signal_\star,\hid_\star, \data_\star) \in \X \times \Hr \times \Y$ with  $\Wo \signal_\star = \hid_\star$, $\Ao \hid_\star = \data_\star$.
\item\label{ass:relaxed2}   $\exists u \in \Hr \colon $  $\Wo^* u \in \partial \reg(\signal_\star)$ 
\item\label{ass:relaxed3}  $\exists v \in \Y \colon$ $\Ao^* v-u \in \partial \snorm{\hid_\star}_{1,\ka}$
\item\label{ass:relaxed4}   $\Ao_{\Omega[\Ao^*v-u]}$ is injective.
\end{enumerate}
\end{cond}

Conditions \ref{ass:relaxed2}, \ref{ass:relaxed3}  are source conditions  very commonly assumed in regularization theory.   From  \ref{ass:relaxed3}  it follows that $\hid_\star$ is sparse and contained in  $\Hr_{\Omega[\Ao^*v-u]}$.  Condition \ref{ass:relaxed4} is the restricted injectivity condition.  
 
\begin{remark}[Product formulation] \label{rem:product}
We introduce the operator $\Mall  \colon \X \times \Hr \to \Hr \times \Y$ and the functional $\qreg \colon \X \times \Hr \to [0, \infty] $,
\begin{align}\label{eq:product:fr1}
&\Mall  (\signal ,\hid)  \coloneqq   (\Wo \signal-\hid, \Ao \hid) 
\\ \label{eq:product:fr2}
 &\qreg  (\signal,\hid) \coloneqq  \reg(\signal)+\snorm{\hid}_{1,\ka}
\,.
\end{align}
Using these  notions, one  can rewrite the  relaxed co-regularization functional $\relaxed_{\al,\data^\delta}$   as   
\begin{equation*}
	\relaxed_{\al,\data^\delta} (\signal,\hid) = \frac{1}{2} \snorm{\Mall(\signal ,\hid) -(0,\data^\delta)}^2 + \al \qreg(\signal,\hid).
\end{equation*}
Because  $\Wo$ and $\Ao$ are linear and bounded, $\Mall$ is linear and bounded, too.  Moreover, since  $\reg$ and $\snorm{\edot}_{1,\ka}$ are  proper, convex and wlsc,  $\qreg$ has these properties, too.  The subdifferential $\partial \qreg (\signal_\star,\hid_\star)$ is  given by 
$\partial \qreg (\signal_\star,\hid_\star) = \partial \reg (\signal_\star) \times \partial \snorm{\hid_\star}_{1,\kappa}$.
The Bregman distance  with respect to $\xi = (\xi_1, \xi_2)$ is  given  by 
\begin{equation} \label{eq:product:bregman}
	\bregman_{(\xi_1, \xi_2)}^\qreg((\signal,\hid),(\signal_\star, \hid_\star))
	=
	\bregman_{\xi_1}^{\reg}(\signal-\signal_\star) + \bregman_{\xi_2}^{\enorm{}_{1,\kappa}} (\hid-\hid_\star)  \,.
\end{equation}
\ref{ass:relaxed2}, \ref{ass:relaxed3} can be written as $ \Mall^*(u, v) \in \partial \qreg (\signal_\star,\hid_\star)$.
\end{remark}
 
Here comes our main estimate for the relaxed model.

\begin{theorem}[Relaxed $\ell^1$ co-regularization] \label{thm:relaxed}
Let  Condition~\ref{ass:relaxed} hold and consider the parameter choice $\al = C \delta$ for $C > 0$. Then for all $\data^\delta \in \Y$ with  $\snorm{\data^\delta-\data_\star} \leq \delta$ and all $(\signal_\al^\delta,\hid_\al^\delta) \in \argmin \relaxed_{\al,\data^\delta}$ we have 
\begin{align}
\bregman_{\Wo^*u}^\reg (\signal_\al^\delta,\signal_\star) & \leq c_{(u,v)} \delta, \label{A:B_rate} \\
\snorm{\hid_\al^\delta-\hid_\star} & \leq d_{(u,v)} \delta \,,
\end{align}
where
\begin{align*}
& c_{(u,v)} \coloneqq (1+C\snorm{(u,v)})^2 /( 2C)\\
 & \begin{multlined}[0.8\linewidth] d_{(u,v)} \coloneqq 2 \snorm{\Ao^{-1}_{\Omega[\Ao^* v-u]}} (1+ C \snorm{(u,v)}) 
 +  \frac{1+ \snorm{\Ao^{-1}_{\Omega[\Ao^* v-u]}} \snorm{\Ao}}{ m[\eta]} c_{(u,v)}. \end{multlined}
\end{align*}
\end{theorem}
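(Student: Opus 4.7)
The plan is to reduce the relaxed model to a single instance of variational regularization in a product space, as set up in Remark~\ref{rem:product}, and then combine Lemma~\ref{lem:var} with Lemma~\ref{lem:normbound2}. Recall the product operator $\Mall(\signal,\hid) = (\Wo\signal - \hid, \Ao\hid)$ and penalty $\qreg(\signal,\hid) = \reg(\signal) + \snorm{\hid}_{1,\ka}$. A short computation yields $\Mall^*(u,v) = (\Wo^*u,\, \Ao^*v - u)$, so the two source conditions \ref{ass:relaxed2}--\ref{ass:relaxed3} merge into the single source condition $\Mall^*(u,v) \in \partial\qreg(\signal_\star,\hid_\star)$. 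Since $\Mall(\signal_\star,\hid_\star) = (0,\data_\star)$ and $\snorm{(0,\data^\delta) - (0,\data_\star)} \leq \delta$, the hypotheses of Lemma~\ref{lem:var} are satisfied for the product problem with minimizer $(\signal_\al^\delta,\hid_\al^\delta)$ of $\relaxed_{\al,\data^\delta}$.

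Next I would apply Lemma~\ref{lem:var} to obtain the residual bound $\snorm{\Mall(\signal_\al^\delta,\hid_\al^\delta) - (0,\data^\delta)} \leq \delta + 2\al\snorm{(u,v)}$ together with the Bregman bound $\bregman_{\Mall^*(u,v)}^\qreg((\signal_\al^\delta,\hid_\al^\delta),(\signal_\star,\hid_\star)) \leq (\delta + \al\snorm{(u,v)})^2/(2\al)$. By the product decomposition \eqref{eq:product:bregman}, the latter Bregman distance splits additively into $\bregman_{\Wo^*u}^\reg(\signal_\al^\delta,\signal_\star) + \bregman_{\Ao^*v - u}^{\enorm{}_{1,\ka}}(\hid_\al^\delta,\hid_\star)$, and both summands are non-negative. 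Plugging in $\al = C\delta$ turns the right-hand side into $(1 + C\snorm{(u,v)})^2/(2C) \cdot \delta = c_{(u,v)}\delta$, which is an upper bound for each individual summand. This already delivers the first claimed estimate \eqref{A:B_rate}, and simultaneously yields $\bregman_{\Ao^*v-u}^{\enorm{}_{1,\ka}}(\hid_\al^\delta,\hid_\star) \leq c_{(u,v)}\delta$, a bound needed in the next step.

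To obtain the norm rate for $\hid_\al^\delta - \hid_\star$, I would invoke Lemma~\ref{lem:normbound2} with $\eta = \Ao^*v - u$, which is admissible thanks to sparsity of $\hid_\star$ (forced by \ref{ass:relaxed3}) and the restricted injectivity \ref{ass:relaxed4}. It remains to control $\snorm{\Ao\hid_\al^\delta - \Ao\hid_\star}$: extracting the second component from the residual bound above gives $\snorm{\Ao\hid_\al^\delta - \data^\delta} \leq \delta + 2\al\snorm{(u,v)}$, and combining with $\Ao\hid_\star = \data_\star$ and the triangle inequality yields $\snorm{\Ao\hid_\al^\delta - \Ao\hid_\star} \leq 2\delta(1 + C\snorm{(u,v)})$. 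Substituting this and the Bregman bound $c_{(u,v)}\delta$ into \eqref{eq:normbound2} reproduces exactly the constant $d_{(u,v)}$.

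The main obstacle is essentially the bookkeeping of the product formulation: one must correctly identify $\Mall^*$, verify that linearity/boundedness of $\Mall$ and properness/convexity/wlsc of $\qreg$ transfer from their components, check that the subdifferential of $\qreg$ factorizes as a product, and carefully project the product residual onto its second coordinate when feeding Lemma~\ref{lem:normbound2}. Once these pieces are in place the proof is just a concatenation of two already-established lemmas.
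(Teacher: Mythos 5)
Your proposal is correct and follows essentially the same route as the paper's proof: the product formulation of Remark~\ref{rem:product}, Lemma~\ref{lem:var} applied to $\Mall$ and $\qreg$ with the merged source condition $\Mall^*(u,v)=(\Wo^*u,\Ao^*v-u)\in\partial\qreg(\signal_\star,\hid_\star)$, the additive splitting \eqref{eq:product:bregman} of the Bregman distance, and Lemma~\ref{lem:normbound2} with $\eta=\Ao^*v-u$ to convert the residual and Bregman bounds into the norm rate. Your explicit triangle-inequality step bounding $\snorm{\Ao\hid_\al^\delta-\Ao\hid_\star}$ by $2\delta(1+C\snorm{(u,v)})$ is exactly what the paper does implicitly, and the constants match.
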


\begin{proof}
According to \ref{ass:relaxed3}, $\eta \coloneqq \Ao^* v -u  \in \partial\snorm{\hid_\star}_{1,\kappa}$, which implies that $\Omega[\eta]$ is  finite and  $\hid_\star \in \Hr_{\Omega[\eta]}$.
With  \ref{ass:relaxed4} and   Lemma \ref{lem:normbound2} we therefore get 
\begin{equation} \label{eq:relaxed-aux1}
\snorm{\hid_\al^\delta - \hid_\star} \leq \snorm{\Ao^{-1}_{\Omega[\eta]}} \snorm{\Ao \hid^\delta_\al - \data_\star}  
+   \frac{(1+ \snorm{\Ao^{-1}_{\Omega[\eta]}} \snorm{\Ao})  }{ m[\eta]} \bregman_{\eta}^{\snorm{\edot}_{1,\kappa}}(\hid^\delta_\al,\hid_\star)\,.
\end{equation}
Using the product formulation as in  Remark \ref{rem:product}, according to   \ref{ass:relaxed2}, \ref{ass:relaxed3}  the source condition $ \Mall^*(u, v) \in \partial \qreg (\signal_\star,\hid_\star)$ holds. By Lemma \ref{lem:var} and the choice $\al= C \delta$   we obtain
\begin{align*}
& \snorm{ \Mall(\signal_\al^\delta, \hid_\al^\delta) - (0,\data^\delta)}   \leq (1+ 2C \snorm{(u,v)}) \, \delta
\\
& \bregman_{\Mall^*(u,v)}^\qreg \left( (\signal_\al^\delta,\hid_\al^\delta), (\signal_\star,\hid_\star) \right)    \leq (1+C\snorm{(u,v)})^2   / ( 2C) \, \delta \,. 
\end{align*}
Using \eqref{eq:product:fr1}, \eqref{eq:product:fr2}, \eqref{eq:product:bregman} we obtain     
\begin{align*}
\snorm{\Ao \hid^\delta_\al -\data^\delta}  & \leq (1+ 2C \snorm{(u,v)}) \, \delta
\\
\snorm{\Wo \signal_\al^\delta - \hid^\delta_\al} &  \leq (1+ 2C \snorm{(u,v)}) \, \delta
\\
\bregman_{\eta}^{\snorm{\edot}_{1,\kappa}}(\hid^\delta_\al,\hid_\star)  &\leq (1+C\snorm{(u,v)})^2 \, \delta  / ( 2C)  
\\
\bregman_{\Wo^*u}^\reg (\signal_\al^\delta,\signal_\star)   &  \leq (1+C\snorm{(u,v)})^2 \, \delta  / ( 2C) \,.
\end{align*}
Combining this with \eqref{eq:relaxed-aux1} completes the proof. 
\end{proof}

If $\reg$ is totally convex, then convergence in the Bregman distance implies convergence in the norm \cite[Lemma 3.31]{scherzer2009variational}.  For example, for the standard penalty $\reg =  \snorm{\edot}^2/2$ from Theorem~\ref{thm:relaxed} one deduces the rate $\snorm{\signal_\al^\delta - \signal_\star} = \mathcal{O}( \sqrt{\delta})$.

\subsection{Strict $\ell^1$ co-regularization}

Next we analyze the strict approach \eqref{eq:strict}. We derive linear  convergence rates under  the  following condition.
 
\begin{cond} \mbox{}\label{ass:strict}
\begin{enumerate}[label=(2.\arabic*),leftmargin=1cm]
\item\label{ass:strict2}  $(\signal_\star, \data_\star) \in \X \times  \Y$ satisfies $\Ao \Wo \signal_\star = \data_\star$.
\item\label{ass:strict3}  $\exists \nu \in \Y \colon$  $\Wo^* \Ao^* \nu \in  \partial ( \reg + \partial \snorm{\Wo (\edot)}_{1,\kappa}) (\signal_\star)$

\item\label{ass:strict4} $\exists \xi \in \partial \reg (\signal_\star)$  $\exists  \etaw \in  \partial \snorm{\edot}_1 (\Wo \signal_\star) \colon $  $\Wo^* \Ao^* \nu = \xi + \Wo^* \etaw$ 

\item\label{ass:strict5} $\Ao_{\Omega[\etaw]}$ is injective.  
\end{enumerate}
\end{cond}

Condition \ref{ass:strict3} is a source condition for  the forward operator $ \Wo\Ao$ and  the regularization functional   $\reg + \partial \snorm{\Wo (\edot)}_{1,\kappa}$. Condition \ref{ass:strict4} assumes  the splitting of the subgradient $\Wo^* \Ao^* \nu = \xi + \Wo^* \etaw$ into subgradients $ \xi \in \partial \reg (\signal_\star)$  and  $  \Wo^* \etaw \in   \partial \snorm{\Wo(\edot)}_1 ( \signal_\star) $.  The assumption \ref{ass:strict5}  is the restricted injectivity.

\begin{theorem}[Strict $\ell^1$ co-regularization]\label{thm:strict}
Let  Condition \ref{ass:strict} hold and consider the parameter choice $\al = C \delta$ for  $C > 0$. Then for  $\snorm{\data^\delta-\data_\star} \leq \delta$ and $\signal_\al^\delta \in \argmin \strict_{\al,\data^\delta}$ we have 
\begin{align}
\bregman_{\xi}^\reg  (\signal_\al^\delta, \signal_\star) & \leq c_{(\nu, \etaw)} \delta \label{D_xi_r}
\\
\snorm{\Wo \signal_\al^\delta - \Wo \signal_\star} & \leq d_{(\nu, \etaw)} \delta \,,
\end{align}
with the constants
\begin{align*}
& c_{(\nu,\etaw)} \coloneqq (1+C\snorm{\nu})^2/(2C)\\
 &
\begin{multlined}[0.8\linewidth]
d_{(\nu, \etaw)} \coloneqq 2 \snorm{\Ao^{-1}_{\Omega[\etaw]}} (1+ C \snorm{\nu}) 
+   \frac{1+ \snorm{\Ao^{-1}_{\Omega[\etaw]}} \snorm{\Ao}}{m[\etaw]}c_{(\nu,\etaw)} \,.
\end{multlined}
\end{align*}
\end{theorem}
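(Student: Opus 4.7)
The plan is to mirror the structure of the proof of Theorem~\ref{thm:relaxed}, but with the simpler ``product" $\Mall = \Ao \Wo$ and the composite regularizer $\qreg(\signal) \coloneqq \reg(\signal) + \snorm{\Wo \signal}_{1,\kappa}$. Since $\Wo$ and $\Ao$ are bounded and linear and both $\reg$ and $\snorm{\edot}_{1,\kappa}$ are proper, convex and wlsc, $\qreg$ is proper, convex and wlsc and $\strict_{\al,\data^\delta}(\signal) = \frac{1}{2}\snorm{\Ao \Wo \signal - \data^\delta}^2 + \al \qreg(\signal)$ fits the setting of Lemma~\ref{lem:var}. Condition \ref{ass:strict3} provides the source element $\nu \in \Y$ with $(\Ao \Wo)^* \nu \in \partial \qreg(\signal_\star)$.

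First I would invoke Lemma~\ref{lem:var} with the parameter choice $\al = C \delta$ to get, simultaneously,
\begin{align*}
\snorm{\Ao \Wo \signal_\al^\delta - \data^\delta} &\leq (1 + 2 C \snorm{\nu})\,\delta,
\\
\bregman_{\Wo^*\Ao^*\nu}^{\qreg}(\signal_\al^\delta, \signal_\star) &\leq \frac{(1+C\snorm{\nu})^2}{2C}\,\delta = c_{(\nu,\etaw)}\,\delta.
\end{align*}
Next I would exploit the splitting \ref{ass:strict4}: since $\xi \in \partial \reg(\signal_\star)$ and $\etaw \in \partial \snorm{\edot}_{1,\kappa}(\Wo \signal_\star)$, the linearity of $\Wo$ gives the additive decomposition
\begin{equation*}
\bregman_{\xi + \Wo^* \etaw}^{\qreg}(\signal_\al^\delta, \signal_\star)
= \bregman_{\xi}^{\reg}(\signal_\al^\delta, \signal_\star) + \bregman_{\etaw}^{\snorm{\edot}_{1,\kappa}}(\Wo \signal_\al^\delta, \Wo \signal_\star),
\end{equation*}
and both summands are nonnegative. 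Combined with the previous bound this immediately yields \eqref{D_xi_r} as well as the crucial intermediate estimate $\bregman_{\etaw}^{\snorm{\edot}_{1,\kappa}}(\Wo \signal_\al^\delta, \Wo \signal_\star) \leq c_{(\nu,\etaw)}\,\delta$.

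For the norm rate on $\Wo \signal_\al^\delta - \Wo \signal_\star$, I would apply Lemma~\ref{lem:normbound2} with $\hid = \Wo \signal_\al^\delta$, $\hid_\star = \Wo \signal_\star$ and subgradient $\etaw$ (the hypothesis of the lemma is supplied by \ref{ass:strict5}, and sparsity of $\Wo \signal_\star \in \Hr_{\Omega[\etaw]}$ follows from $\etaw \in \partial \snorm{\edot}_{1,\kappa}(\Wo \signal_\star)$). The residual term is controlled by the triangle inequality,
\begin{equation*}
\snorm{\Ao \Wo \signal_\al^\delta - \Ao \Wo \signal_\star}
\leq \snorm{\Ao \Wo \signal_\al^\delta - \data^\delta} + \snorm{\data^\delta - \data_\star}
\leq 2(1 + C \snorm{\nu})\,\delta,
\end{equation*}
while the Bregman term is bounded by $c_{(\nu,\etaw)} \delta$. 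Substituting these two estimates into \eqref{eq:normbound2} produces exactly the claimed constant $d_{(\nu,\etaw)}$.

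The only nontrivial step is the Bregman decomposition in the middle paragraph: one must verify that $\xi + \Wo^* \etaw$ really acts as a subgradient of $\qreg$ at $\signal_\star$ with the additive Bregman formula, using the chain rule $\partial(\snorm{\Wo (\edot)}_{1,\kappa})(\signal_\star) \supseteq \Wo^* \partial \snorm{\edot}_{1,\kappa}(\Wo \signal_\star)$. This is immediate from the definition of the subdifferential and the linearity of $\Wo$, so no further structural hypotheses are needed; the rest of the argument is the same plug-and-chase as in the proof of Theorem~\ref{thm:relaxed}.
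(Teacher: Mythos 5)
Your proposal is correct and follows essentially the same route as the paper's own proof: apply Lemma~\ref{lem:var} with $\Mall=\Ao\Wo$ and $\qreg=\reg+\snorm{\Wo(\edot)}_{1,\kappa}$, split the Bregman distance via \ref{ass:strict4}, and feed the residual and Bregman bounds into Lemma~\ref{lem:normbound2}. The only difference is that you spell out the additive Bregman decomposition and the chain-rule inclusion explicitly, which the paper states without detail.
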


\begin{proof}
Condition~\ref{ass:strict} implies that  $\Omega[\etaw]$ finite and $\Wo \signal_\star \in \Hr_{\Omega[\etaw]}$. From Lemma \ref{lem:normbound2} we obtain
\begin{equation} \label{thm:strict:aux1}
\snorm{\Wo \signal_\al^\delta - \Wo \signal_\star} \leq \snorm{\Ao^{-1}_{\Omega[\etaw]}} \snorm{\Ao \Wo \signal_\al^\delta - \data_\star} 
 +   \frac{ (1+ \snorm{\Ao^{-1}_{\Omega[\etaw]}} \snorm{\Ao}) }{m[\etaw]} \bregman_{\etaw}^{\snorm{\edot}_1}(\Wo \signal_\al^\delta, \Wo \signal_\star).
\end{equation}
According to  \ref{ass:strict3} and Lemma~\ref{lem:var} applied with   $\Mall  = \Ao \Wo$ and $\qreg   = \reg + \snorm{\Wo (\edot)}_{1,\kappa}$ we obtain      
\begin{align} \label{thm:strict:aux2}
& \snorm{\Ao \Wo \signal_\al^\delta - \data^\delta} \leq  (1+ 2C \snorm{\nu}) \, \delta 
 \\ \label{thm:strict:aux3}
& \bregman_{\Wo^* \Ao^* \nu}^\qreg \left(\signal_\al^\delta,\signal_\star \right) 
  \leq (1+C\snorm{\nu})^2/(2C) \, \delta.
\end{align}
From \ref{ass:strict3}  we obtain $\bregman_{\Wo^* \Ao^* \nu}^\qreg = \bregman_{\etaw}^{\snorm{\edot}_1}(\Wo (\edot), \Wo (\edot)) + \bregman_{\xi}^{\reg}$. Together with  \eqref{thm:strict:aux1}, \eqref{thm:strict:aux2}, \eqref{thm:strict:aux2}  this show the claim. 
\end{proof}

\section{Necessary Conditions}\label{sec:Spec}

In this section we show that the source condition and restricted injectivity are not only sufficient but also necessary for linear convergence of relaxed $\ell^1$ co-regularization. In the following we  restrict ourselves  to the $\ell^1$-norm 
\begin{equation*}
	\snorm{\edot}_1 \coloneqq \snorm{\edot}_{1,1} = \sum_{\la \in \La}\abs{\inner{\phi_\la}{\edot}}.
\end{equation*}
We denote  by $\Mall$ and $\qreg$ the product  operator and  regularizer defined in \eqref{eq:product:fr1}, \eqref{eq:product:fr2}. We call $(\signal_\star, \hid_\star)$ a $\qreg$-minimizing solution of $ \Mall (\signal, \hid) = (0,\data_\star)$ if $\signal_\star \in \argmin\{\qreg(\signal) \mid \Mall (\signal, \hid) = (0,\data_\star)\}$. 
In this section we fix the following list of assumptions which  is slightly stronger than Assumption \ref{ass:main}.

\begin{ass} \mbox{} \label{ass:converse}
\begin{enumerate}[label=(B.\arabic*),leftmargin=1cm]
\item\label{ass:converse0} $\Wo \colon \X \to \Hr$ is  linear and bounded with dense range.
\item\label{ass:converse1} $\Ao \colon \Hr \to \Y$ is  linear and bounded.
\item\label{ass:converse5}  $\reg \colon \Hr \to [0,\infty]$ is  proper, strictly convex and wlsc. 
\item\label{ass:converse2}  $\La$ is  countable index set.
\item\label{ass:converse3}  $(\phi_\la)_{\la \in \La} \subseteq \Hr$ is an ONB of $\Hr$.
\item\label{ass:converse4} $\forall \la \in \La \colon \phi_\la \in \ran(\Wo)$.
\item\label{ass:converse6}  $\exists \signal \in  \X \colon \reg(\signal) + \snorm{\Wo \signal}_1  < \infty$.
\item  $\reg$ is Gateaux differentiable at $\signal_\star$ if $(\signal_\star,\hid_\star)$ is the unique $\qreg$-minimizing solution of $\Mall(\signal,\hid)=(0,\data_\star)$. \end{enumerate}
\end{ass}

Under Assumption~\ref{ass:converse}, the equation $\Mall(\signal,\hid)=(0,\data_\star)$  has a unique $\qreg$-minimizing solution.

\begin{cond} \mbox{}\label{ass:strong}
\begin{enumerate}[label=(3.\arabic*),leftmargin=1cm]
\item\label{ass:strong1}  $(\signal_\star,\hid_\star, \data_\star) \in \X \times \Hr \times \Y$ with $\Ao \hid_\star = \data_\star$, $\Wo \signal_\star = \hid_\star$.
\item\label{ass:strong2}  $\exists u \in \Hr \colon $  $\Wo^* u \in \partial \reg(\signal_\star)$ 
\item\label{ass:strong3}  $\exists v \in \Y \colon$ $\Ao^* v-u \in \partial \snorm{\hid_\star}_1$
\item\label{ass:strong4}  $\forall \la \notin \supp(\hid_\star) \colon$ $\abs{\inner{\phi_\la}{\Ao^*  v -  u}} < 1$
\item\label{ass:strong5}  $\Ao_{\supp[\hid_\star]}$ is injective.
\end{enumerate}
\end{cond}

\subsection{Auxiliary results}

Condition~\ref{ass:strong} is clearly stronger than Condition~\ref{ass:relaxed}. Below we will show that these  conditions are actually equivalent.   For that purpose we start  with  several lemmas. These results are in the spirit of \cite{Gr11}  where necessary conditions for standard $\ell^1$ minimization have been derived.    
 
 \begin{lemma} \label{lem:ness1}
Assume  that $(\signal_\star,\hid_\star) \in \X \times \Hr$ is the unique $\qreg$-minimizing solution of  $\Mall(\signal,\hid)=(0,\data_\star)$, let $u \in \Hr$ satisfy $\Wo^* u =  \partial \reg(\signal_\star)$ and assume that $\hid_\star$ is sparse. Then:
\begin{enumerate}[label=(\alph*)]
\item \label{item:ness1a}  The restricted mapping $\Ao_{\supp(\hid_\star)}$ is injective. 
\item \label{item:ness1b} For every finite set $\Omega_1$ with $\supp(\hid_\star) \cap \Omega_1 = \emptyset$ there exists $\theta \in \Y$ such that
\begin{align*}&\forall \la \in   \supp(\hid_\star) \colon    \inner{\phi_\la}{\Ao^* \theta - u} =  \sign(\inner{\phi_\la}{\hid_\star}) 
 \\ 
&\forall \la \in   \Omega_1 \colon   \abs{\inner{\phi_\la}{\Ao^* \theta - u}} <  1    \,.
\end{align*}
\end{enumerate}
\end{lemma}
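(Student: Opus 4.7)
My plan is to deduce both (a) and (b) from the unique minimality of $(\signal_\star,\hid_\star)$ by probing the feasible set $\Mall(\signal,\hid)=(0,\data_\star)$ along carefully chosen first-order perturbations, exploiting crucially the inclusion $\phi_\la \in \ran(\Wo)$ from \ref{ass:converse4} and the Gateaux differentiability of $\reg$ at $\signal_\star$.

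For (a) I would argue by contradiction. Assume there exists $h \in \Hr_{\supp(\hid_\star)} \setminus \{0\}$ with $\Ao h = 0$. Since $\supp(\hid_\star)$ is finite and each $\phi_\la \in \ran(\Wo)$, one may write $h = \Wo w$ for some $w \in \X$ with $w \neq 0$. The curve $(\signal_\star + tw,\hid_\star + th)$ is then feasible for every $t \in \R$, and for $|t|$ sufficiently small the sign pattern of $\hid_\star + th$ on $\supp(\hid_\star)$ coincides with that of $\hid_\star$, so that
\begin{equation*}
\snorm{\hid_\star + th}_1 = \snorm{\hid_\star}_1 + t \inner{\eta_0}{h},
\end{equation*}
where $\eta_0 \coloneqq \sum_{\la \in \supp(\hid_\star)} \sign(\inner{\phi_\la}{\hid_\star}) \phi_\la$ is the canonical sign-subgradient. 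Dividing the strict inequality $\qreg(\signal_\star + tw,\hid_\star + th) > \qreg(\signal_\star,\hid_\star)$ by $t$ and letting $t \to 0^\pm$, using Gateaux differentiability of $\reg$ to identify its directional derivative as $\inner{\Wo^* u}{w} = \inner{u}{h}$, I would extract the first-order identity $\inner{u + \eta_0}{h} = 0$. The contradiction would then be produced by exploiting strict convexity of $\reg$ along $w \neq 0$ together with this vanishing first-order behaviour, comparing the pair $(\signal_\star \pm tw, \hid_\star \pm th)$ or a suitable convex combination against $(\signal_\star,\hid_\star)$ to manufacture a second feasible point whose $\qreg$-value does not strictly exceed that of $(\signal_\star,\hid_\star)$.

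For (b), given a finite $\Omega_1$ disjoint from $\supp(\hid_\star)$, set $\Omega \coloneqq \supp(\hid_\star) \cup \Omega_1$ and recast the claim as a finite-dimensional convex feasibility problem for the data vector $\bigl(\inner{\phi_\la}{\Ao^* \theta - u}\bigr)_{\la \in \Omega} \in \R^\Omega$: its component on $\supp(\hid_\star)$ must equal the prescribed signs, while its component on $\Omega_1$ must lie in the open box $(-1,1)^{\Omega_1}$. Injectivity of $\Ao_{\supp(\hid_\star)}$ established in (a) guarantees that the equality constraints on $\supp(\hid_\star)$ are consistent. To produce the strict inequalities on $\Omega_1$ simultaneously, I would invoke a Hahn--Banach / Farkas-type separation argument: if no admissible $\theta$ existed, a separating functional would encode a nontrivial sparse perturbation of $\hid_\star$ supported in $\Omega$ which, together with a compensating adjustment in the $\signal$-coordinate (possible again by \ref{ass:converse4}), preserves feasibility without strictly raising $\qreg$, contradicting unique minimality.

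The main obstacle lies in the step of (a) that upgrades the first-order identity $\inner{u + \eta_0}{h} = 0$ to a genuine contradiction. Unlike the pure $\ell^1$ setting of \cite{Gr11}, where vanishing of the first-order term immediately forces $\snorm{\cdot}_1$ to be locally affine in the forbidden direction and hence manufactures a second minimizer, here the additional strict convexity of $\reg$ could potentially stabilize the flat direction by itself, so the delicate interplay between the affine behaviour of $\snorm{\cdot}_1$ along $h$ and the strict convexity of $\reg$ along $w$ is the decisive technical ingredient. Once (a) is in place, (b) should follow relatively cleanly by finite-dimensional separation, the remaining subtlety being to secure all $|\Omega_1|$ strict inequalities at once, which is manageable because $\Omega_1$ is finite and $\theta$ ranges freely over $\Y$.
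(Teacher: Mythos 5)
Your overall strategy---probing the unique minimality along feasible curves $(\signal_\star+tw,\hid_\star+t\Wo w)$ with $\Wo w\in\ker(\Ao)$ for part \ref{item:ness1a}, and a finite-dimensional duality/separation argument for part \ref{item:ness1b}---is essentially the route the paper takes. However, the obstacle you flag at the end of your proposal is not a deferred technicality: it is a genuine gap, and it sits at exactly the point where the paper's own argument is unjustified. Your first-order analysis correctly yields only the identity $\inner{u+\eta_0}{h}=0$, i.e.\ that both one-sided derivatives of $t\mapsto\qreg(\signal_\star+tw,\hid_\star+th)$ vanish at $t=0$, and this does \emph{not} contradict unique minimality. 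Along such a direction the $\ell^1$-term is exactly affine in $t$ for small $\abs{t}$, while $\reg(\signal_\star+tw)$ is strictly convex with matching slope at $t=0$, so the sum has a strict minimum at $t=0$ and no second minimizer is produced; no convex combination of $\pm tw$ escapes this. Your suspicion that ``strict convexity of $\reg$ could stabilize the flat direction by itself'' is correct. The paper closes this step by asserting the strict inequality $0<\lim_{t\downarrow 0}[\qreg(\signal_\star+tx,\hid_\star+tw)-\qreg(\signal_\star,\hid_\star)]/t$ in \eqref{inj_ineq_A}, but a limit of positive difference quotients is in general only $\geq 0$; the strictness argument from the pure $\ell^1$ setting of \cite{Gr11}, where the quotient is eventually constant in $t$ by piecewise linearity, breaks once the non-polyhedral term $\reg$ is added. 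Indeed the step cannot be repaired from the stated hypotheses: take $\X=\Hr=\R^2$, $\Y=\R$, $\Wo=\id$, $\Ao(h_1,h_2)=h_2$, $\reg(\signal)=10\,\snorm{\signal-(21/20,0)}^2$, $\signal_\star=\hid_\star=(1,0)$, $\data_\star=0$. Then $(\signal_\star,\hid_\star)$ is the unique $\qreg$-minimizing solution, $u=(-1,0)$ satisfies $\Wo^*u=\nabla\reg(\signal_\star)$, and $\hid_\star$ is sparse, yet $\Ao\phi_1=0$, so $\Ao_{\supp(\hid_\star)}$ is not injective.

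For part \ref{item:ness1b} your separation sketch is in the same spirit as the paper's argument, which constructs an explicit candidate $z$, reduces existence of $\theta$ to the finite-dimensional problem \eqref{min_problem} over corrections orthogonal to $\ker(\Ao_{\Omega\cup\Omega_1})$, and bounds the value of the dual problem \eqref{dual_problem} by a constant $\mu<1$. The missing ingredient in your sketch is quantitative: consistency of the equality constraints plus abstract Hahn--Banach separation does not by itself yield the \emph{strict} bounds $\abs{\inner{\phi_\la}{\Ao^*\theta-u}}<1$ on all of $\Omega_1$ simultaneously; one needs the uniform margin $\mu\in(0,1)$ of \eqref{mu_ineq}, which is extracted from the strict inequality \eqref{inj_wineq_A} by finite-dimensionality and compactness. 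Since \eqref{inj_wineq_A} is the same strictness claim that fails in part \ref{item:ness1a}, both halves of your plan---and of the paper's proof---ultimately rest on a step that does not follow from unique $\qreg$-minimality once the strictly convex penalty $\reg$ is present; some additional hypothesis would be needed to make the lemma hold.
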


\begin{proof} \mbox{} \ref{item:ness1a}:
Denote $\Omega \coloneqq \supp(\hid_\star)$. After possibly replacing some basis vectors by $-\phi_\la$, we may assume without loss of generality that $\sign(\inner{\phi_\la}{\hid_\star})=1$ for  $\la \in \Omega$. 
Since $(\signal_\star,\hid_\star)$ is the unique $\qreg$-minimizing solution of $\Mall(\signal,\hid)=(0,\data_\star)$, it follows that
\[ \qreg(\signal_\star,\hid_\star) < \qreg(\signal_\star + tx, \hid_\star + t \Wo x) \quad \]
for all $t\neq 0$ and  all  $\signal \in \X$ with $w \coloneqq \Wo \signal \in \ker(\Ao)\setminus \{0\}$.
Because $\Omega$ is  finite, the mapping
\begin{equation*}
t \mapsto \snorm{\hid_\star + t w}_1 
= \sum_{\la \in \Omega} \abs{\inner{\phi_\la}{\hid_\star}+ t \inner{\phi_\la}{w}} + \abs{t} \sum_{\la \notin \Omega} \abs{\inner{\phi_\la}{w}}
\end{equation*}
is piecewise linear.
Taking the one-sided directional derivative of $\qreg$ with respect to $t$, we have
\begin{align}
\notag 0 & < \lim_{t \downarrow 0} \frac{\qreg(\signal_\star + tx, \hid_\star + t w)- \qreg(\signal_\star,\hid_\star)}{t} \\
\notag & =  \lim_{t \downarrow 0} \frac{\snorm{\hid_\star + t w}_1- \snorm{\hid_\star}_1}{t} + \lim_{t \downarrow 0} \frac{\reg(\signal_\star + tx)-\reg(\signal_\star)}{t} \\
 & = \sum_{\la \in \Omega} \inner{\phi_\la}{w} + \sum_{\la \notin \Omega} \abs{\inner{\phi_\la}{w}} + \inner{\Wo^* u}{x}.\label{inj_ineq_A}
\end{align}
For the  last equality we used that $\inner{\phi_\la}{\hid_\star}=1$ for all $\la \in \Omega$, that $\reg$ is Gateaux differentiable and that $\Wo^* u = \partial \reg(\signal_\star)$. 
Inserting $-(\signal,w)$ instead of $(\signal,w)$ in  \eqref{inj_ineq_A}  we deduce
\begin{equation} \label{inj_wineq_A}
\sum_{\la \notin \Omega} \abs{\inner{\phi_\la}{w}} > \Big\vert \sum_{\la \in \Omega} \inner{\phi_\la}{w}  + \inner{u}{w} \Big\vert 
\end{equation}
for all $w \in (\ker(\Ao) \cap \ran(\Wo))\setminus \{0\}$.
In particular, 
\begin{equation}
\forall w \in (\ker(\Ao) \cap \ran(\Wo))\setminus \{0\} \colon \sum_{\la \notin \Omega} \abs{\inner{\phi_\la}{w}} > 0
\end{equation}
and consequentially  $\ker(\Ao) \cap \ran(\Wo) \cap \Hr_\Omega = \{0\}$. 
Because $\phi_\la \in \ran(\Wo)$ for all $\la \in \La$ and  $\Omega$ is finite, we have  $\Hr_\Omega \subseteq \ran(\Wo)$.  Therefore  $\ker(\Ao) \cap \Hr_\Omega = \{0\}$ which verifies that  $\Ao_\Omega$ is injective.

 \ref{item:ness1b}: Let $\Omega_1 \subseteq \La$ be finite with $\Omega \cap \Omega_1 = \emptyset$. 
Inequality \eqref{inj_wineq_A} and the finiteness of $\Omega \cup \Omega_1$ imply the existence of a constant $\mu \in (0,1)$ such that, for  $w \in \ker(\Ao) \cap \Hr_{\Omega \cup \Omega_1}$,
\begin{equation} \label{mu_ineq}
\mu \sum_{\la \in \Omega_1} \abs{\inner{\phi_\la}{w}} \geq \Big\vert \sum_{\la \in \Omega} \inner{\phi_\la}{w}  + \inner{u}{w} \Big\vert \,. 
\end{equation} 
Assume for the moment  $\xi \in \ran(\Ao^*_{\Omega \cup \Omega_1})$. Then $\xi = \Ao^*_{\Omega \cup \Omega_1} \theta$ for some $\theta \in \Y$.
Due to \ref{ass:converse3},  $\pi_{\Omega}$ is an orthogonal projection and the adjoint of the embedding $\mathrm{i}_{\Omega}$.  The identity
$\pi_{\Omega \cup \Omega_1} \circ \Ao^* = (\Ao \circ i_{\Omega \cup \Omega_1})^* = \Ao^*_{\Omega \cup \Omega_1} $ implies that 
\[ \forall \la \in \Omega \cup \Omega_1 \colon \inner{\phi_\la}{\xi} = \inner{\phi_\la}{\Ao^*_{\Omega \cup \Omega_1} \theta} = \inner{\phi_\la}{\Ao^* \theta} \,.\]
By assumption, $\Hr_{\Omega \cup \Omega_1}$ is finite dimensional and therefore $
\ran(\Ao^*_{\Omega \cup \Omega_1})=\ker(\Ao_{\Omega \cup \Omega_1})^\perp \subseteq \Hr_{\Omega \cup \Omega_1}$, where $(\edot)^\perp$ denotes the orthogonal complement in $\Hr_{\Omega \cup \Omega_1}$.
Consequently we have to show the  existence of $\xi \in (\ker(\Ao_{\Omega \cup \Omega_1})^\perp \subseteq \Hr_{\Omega \cup \Omega_1}$ with 
\begin{align} \label{new_Nes_A}
\begin{aligned}
\inner{\phi_\la}{\xi} &= 1 + u_\la & \forall \la &\in \Omega, \\
\inner{\phi_\la}{\xi} &\in (u_\la - 1, u_\la +1)  & \forall \la &\in \Omega_1,
\end{aligned}
\end{align}
where $u_\la \coloneqq \inner{\phi_\la}{u}$. 

Define the element $z \in \Hr_{\Omega \cup \Omega_1}$  by $\inner{\phi_\la}{z} = 1 + u_\la$ for $\la \in \Omega$ and $\inner{\phi_\la}{z} =u_\la$ for $\la \in \Omega_1$. If $z \in (\ker(\Ao))^\perp$, then we choose $\xi \coloneqq z$ and \eqref{new_Nes_A} is fulfilled.
If, on the other hand, $z \notin (\ker(\Ao))^\perp$, then $\dim(\ker(\Ao_{\Omega \cup \Omega_1})) \eqqcolon s \geq 1$ and there exists a basis $(w^{(1)}, \ldots, w^{(s)})$ of $\ker(\Ao_{\Omega \cup \Omega_1})$ such that 
\begin{align}\label{w_equality}
\begin{aligned}
1 & = \inner{z}{w^{(i)}} \\
& = \sum_{\la \in \Omega} (1+u_\la) \inner{\phi_\la}{w^{(i)}} + \sum_{\la \in \Omega_1} u_\la \inner{\phi_\la}{w^{(i)}} \\
& \begin{multlined}[0.8\linewidth] = \sum_{\la \in \Omega} \inner{\phi_\la}{w^{(i)}} + \sum_{\la \in \Omega \cup \Omega_1} u_\la \inner{\phi_\la}{w^{(i)}}  + \sum_{\la \notin \Omega \cup \Omega_1} u_\la \inner{\phi_\la}{w^{(i)}}  \end{multlined} \\
& = \sum_{\la \in \Omega} \inner{\phi_\la}{w^{(i)}} + \inner{u}{w^{(i)}} \quad \forall i \in \{1, \ldots, s \}
\end{aligned}
\end{align}
Consider now the constrained minimization problem on $\Hr_{\Omega_1}$
\begin{align} \label{min_problem}
\begin{aligned}
&\max_{\la \in \Omega_1}\abs{\inner{\phi_\la}{z'}}  \to \min \\
&\text{subject to } \inner{z'}{w^{(i)}}   = -1 \quad \text{ for } i \in \{1, \ldots, s \}.
\end{aligned}
\end{align}
Because of the equality $1 = \inner{z}{w^{(i)}}$, the admissible vectors $z'$ in \eqref{min_problem} are precisely those for which $\xi \coloneqq z + z' \in (\ker(\Ao_{\Omega \cup \Omega_1}))^\perp$. 
Thus, the task of finding $\xi$ satisfying \eqref{new_Nes_A} reduces to showing that the value of \eqref{min_problem} is strictly smaller that $1$.  
Note that the dual of the convex function $z' \mapsto \max_{\la \in \Omega_1}\abs{\inner{\phi_\la}{z'}}$ is the function
\begin{equation}
\max_{\Omega_1} \ni z' \mapsto 
\begin{cases}
0 & \text{ if } \sum_{\la \in \Omega_1} \abs{\inner{\phi_\la}{z'}} \leq 1 \\
+ \infty & \text{ if } \sum_{\la \in \Omega_1} \abs{\inner{\phi_\la}{z'}} > 1.
\end{cases}
\end{equation}
Recalling that $\inner{z'}{w^{(i)}} = \sum_{\la \in \Omega_1} \inner{\phi_\la}{z'}{\phi_\la}{w^{(i)}}$, it follows that the dual problem to \eqref{min_problem} is the following constrained problem on $\R^s$:
\begin{align} \label{dual_problem}
\begin{aligned}
& S(p) \coloneqq - \sum_{i=1}^s p_i  \to \min  \\
& \text{subject to } \sum_{\la \in \Omega_1} \Big\vert \sum_{i=1}^s p_i \inner{\phi_\la}{w^{(i)}} \Big\vert   \leq 1 \,.
\end{aligned}
\end{align}
From \eqref{w_equality} we obtain that
\begin{equation*}
\sum_{\la \in \Omega} \sum_{i=1}^s p_i \inner{\phi_\la}{w^{(i)}} + \sum_{i=1}^s p_i \inner{u}{w^{(i)}} = \sum_{i=1}^s p_i = - S(p)
\end{equation*}
for every $p \in \R^s$. Taking $w = \sum_{i=1}^s p_i w^{(i)} \in \ker(\Ao) \cap \Hr_{\Omega \cup \Omega_1}$, inequality \eqref{mu_ineq} therefore implies that for every $p \in \R^s$ there exist $\mu \in (0,1)$ such that
\begin{align*} 
&\mu \sum_{\la \in \Omega_1} \Big\vert \sum_{i=1}^s p_i \inner{\phi_\la}{w^{(i)}} \Big\vert 
\\ 
& \qquad \geq \Big\vert \sum_{\la \in \Omega} \sum_{i=1}^s p_i \inner{\phi_\la}{w^{(i)}} + \sum_{i=1}^s p_i \inner{u}{w^{(i)}} \Big\vert \\
&\qquad  = \Big\vert \sum_{i=1}^s p_i \Big\vert = \abs{S(p)} \,.
\end{align*} 
From \eqref{dual_problem} it follows that $\abs{S(p)} \leq \mu$ for every admissible vector $p \in \R^s$ for  \eqref{dual_problem}.  Thus the value of $S(p)$ in \eqref{dual_problem} is greater than or  equal to $-\mu$.
Since the value of the primal problem \eqref{min_problem} is the negative of the dual problem \eqref{dual_problem}, this shows that the value of \eqref{min_problem} is at most $\mu$. 
As $\mu \in (0,1)$, this proves that the value of \eqref{min_problem} is strictly smaller than $1$ and, as we have shown above, this proves assertion \eqref{new_Nes_A}. 
\end{proof}

\begin{lemma} \label{lem:converse1}
Assume  that $(\signal_\star,\hid_\star) \in \X \times \Hr$ is the unique $\qreg$-minimizing solution of  $\Mall(\signal,\hid)=(0,\data_\star)$
and suppose   $\Wo^* u \in \partial \reg(\signal_\star)$, $\Ao^* v-u \in \partial \snorm{\hid_\star}_1$ for some  $(u,v) \in \Hr \times \Y$. Then $(\signal_\star,\hid_\star)$ satisfies Condition \ref{ass:strong}.
\end{lemma}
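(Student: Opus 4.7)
The plan is to apply Lemma~\ref{lem:ness1}: part~(a) will give \ref{ass:strong5} immediately, and part~(b) will feed a convex-combination argument producing a dual variable $\tilde v$ that satisfies both \ref{ass:strong3} and the strict bound \ref{ass:strong4}. First, the existence of a subgradient $\Ao^* v - u \in \partial\snorm{\hid_\star}_1$ forces $\hid_\star$ to be sparse (as noted in the paragraph characterising $\partial\snorm{\hid_\star}_{1,\kappa}$), so the hypotheses of Lemma~\ref{lem:ness1} are met. Conditions \ref{ass:strong1} and \ref{ass:strong2} are immediate from unpacking $\Mall(\signal_\star,\hid_\star) = (0,\data_\star)$ and from the given source relation on $u$.

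Write $\Omega := \supp(\hid_\star)$ and $\eta_0 := \Ao^*v - u$. Since $\eta_0 \in \Hr$ has $\ell^2$-summable Fourier coefficients, the set
\[
  \Omega_1 := \{\la \in \La \setminus \Omega : \abs{\inner{\phi_\la}{\eta_0}} = 1\}
\]
is finite and disjoint from $\Omega$. I would apply Lemma~\ref{lem:ness1}(b) with this $\Omega_1$ to obtain $\theta \in \Y$ with $\inner{\phi_\la}{\Ao^*\theta - u} = \sign(\inner{\phi_\la}{\hid_\star})$ on $\Omega$ and $\abs{\inner{\phi_\la}{\Ao^*\theta - u}} < 1$ on $\Omega_1$, and then form the candidate $\tilde v := (1-t) v + t\theta$ with corresponding $\tilde\eta := \Ao^* \tilde v - u = (1-t)\eta_0 + t(\Ao^*\theta - u)$, for a small $t \in (0,1)$ to be fixed. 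The goal is to verify that $\inner{\phi_\la}{\tilde\eta} = \sign(\inner{\phi_\la}{\hid_\star})$ on $\Omega$ and $\abs{\inner{\phi_\la}{\tilde\eta}} < 1$ elsewhere, which delivers $\tilde\eta \in \partial\snorm{\hid_\star}_1$ and \ref{ass:strong4} in one stroke.

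The verification is coefficientwise. The sign condition on $\Omega$ is immediate since both summands agree there. On $\Omega_1$ a convex combination of a coefficient of modulus $1$ with one of modulus strictly less than $1$ has modulus strictly less than $1$ for every $t \in (0,1]$. The step I expect to be the main obstacle is the case $\la \in \La \setminus (\Omega \cup \Omega_1)$: there $\abs{\inner{\phi_\la}{\eta_0}} < 1$ strictly, but Lemma~\ref{lem:ness1}(b) supplies no bound on $\abs{\inner{\phi_\la}{\Ao^*\theta - u}}$. The saving observation is that $\Ao^*\theta - u \in \Hr$, so
\[
  F := \{\la \in \La \setminus (\Omega \cup \Omega_1) : \abs{\inner{\phi_\la}{\Ao^*\theta - u}} > 1\}
\]
is finite; off $F$ the triangle bound $(1-t)\abs{\inner{\phi_\la}{\eta_0}} + t \abs{\inner{\phi_\la}{\Ao^*\theta - u}} < 1$ holds for any $t \in (0,1)$, while on $F$ the finitely many strict conditions $t < (1 - \abs{\inner{\phi_\la}{\eta_0}})/(\abs{\inner{\phi_\la}{\Ao^*\theta - u}} - \abs{\inner{\phi_\la}{\eta_0}})$ each have strictly positive right-hand side, so a common small positive $t$ can be selected. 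With this $t$, $\tilde v$ witnesses both \ref{ass:strong3} and \ref{ass:strong4}, completing the argument.
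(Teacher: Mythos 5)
Your proof is correct and follows essentially the same route as the paper: restricted injectivity and the dual certificate $\theta$ both come from Lemma~\ref{lem:ness1} applied with $\Omega_1 = \Omega[\Ao^*v-u]\setminus\supp(\hid_\star)$, and the improved certificate is the convex combination $(1-t)v+t\theta$ verified coefficientwise on $\supp(\hid_\star)$, on $\Omega_1$, and on the remaining indices. The only difference is in bookkeeping: the paper fixes the mixing parameter explicitly through $m[\Ao^*v-u]$ and $\sup_{\la}\abs{\inner{\phi_\la}{\Ao^*\theta-u}}$, whereas you obtain $t$ from finitely many pointwise constraints on the (finite) exceptional set where $\abs{\inner{\phi_\la}{\Ao^*\theta-u}}>1$; both choices work.
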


\begin{proof}
The restricted injectivity condition \ref{ass:strong5} follows from Lemma \ref{lem:ness1}. Conditions \ref{ass:strong1}, \ref{ass:strong2} are satisfied according to assumption. Define now
\begin{align*}
\Omega_1 & \coloneqq \Omega[\Ao^* v-u] \setminus \supp(\hid_\star) \\
& = \{\la \in \La \setminus \supp(\hid_\star) \mid \abs{\inner{\phi_\la}{\Ao^* v-u}} =1 \}.
\end{align*}
Because $(\inner{\phi_\la}{\Ao^* v-u})_{\la \in \La} \subseteq \ell^2(\La)$, the set $\Omega_1$ is finite. Let $\theta \in \Y$ be as in Lemma \ref{lem:ness1}~\ref{item:ness1b} and set  
\begin{align*}
& \snorm{\theta}_\infty \coloneqq \sup\{ \abs{\inner{\phi_\la}{\Ao^* \theta -u}} \mid \la \in \La \}
\\
& a \coloneqq (1-m[\Ao^* v-u])/(2 \snorm{\theta}_\infty)  
\\
& \hat v \coloneqq (1- a ) v + a \theta \,.
\end{align*}
Note that $a \in (0,1/2 ]$. Then the following hold:  
\begin{itemize}
\item If $\la \in \supp(\hid_\star)$, then 
\begin{equation*}
\inner{\phi_\la}{\Ao^* \hat v - u} 
= (1-a)\inner{\phi_\la}{\Ao^* v - u}  + a \inner{\phi_\la}{\Ao^* \theta - u} 
= \sign(\inner{\phi_\la}{\hid_\star}).
\end{equation*} 
\item If  $\la \in \Omega_1$, then 
\begin{equation*}
\abs{\inner{\phi_\la}{\Ao^* \hat v - u}} 
\leq (1-a)\abs{\inner{\phi_\la}{\Ao^* v - u}}  + a \abs{\inner{\phi_\la}{\Ao^* \theta - u}} 
 < (1-a)+a = 1.
\end{equation*}
\item If  $\la \in \La \setminus ( \supp(\hid_\star) \cup \Omega_1)$, then
\begin{align*} 
&\abs{\inner{\phi_\la}{\Ao^* \hat v - u}} \\ 
&\qquad \leq (1-a) \, m[\Ao^* v - u] + a \snorm{\theta}_\infty \\
&\qquad \leq m[\Ao^* v - u] + (1-m[\Ao^* v - u])/2 \\
&\qquad = (1+m[\Ao^* v - u])/ 2 < 1 .
\end{align*} 
\end{itemize}
Consequently $(u,\hat v)$ satisfies  \ref{ass:strong3}, \ref{ass:strong4}. 
\end{proof}
 
\begin{lemma} \label{lem:ness2}
Let $(\delta_k)_{k \in \N} \in (0, \infty)^\N$ converge to $0$, $(\data_k)_{k \in \N} \in \Y^\N$ satisfy $\snorm{\data_k - \data_\star} \leq \delta_k$ and $\signal_k \in \argmin \relaxed_{\al,\data^\delta}$ with $\al_k \geq C \delta_k$ for  $C > 0$.  Then $\snorm{\signal_k -\signal_\star} \to 0$ and  $\snorm{\Mall \signal_k -\data_k}  = \mathcal{O} (\delta_k)$ as $k \to \infty$ imply ${\ran(\Mall^*) \cap \partial \qreg(\signal_\star) \neq \emptyset}$.
\end{lemma}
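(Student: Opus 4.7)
The plan is to extract a subgradient of $\qreg$ at $\signal_k$ from the first-order optimality condition, show that it lies in $\ran(\Mall^*)$ and is uniformly bounded, and then pass to a weak limit in the resulting subgradient inequality using the strong convergence $\signal_k \to \signal_\star$ and the weak lower semicontinuity of $\qreg$.

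First I would apply Fermat's rule to the convex Tikhonov functional. Since $\signal_k$ minimizes $\tfrac12 \snorm{\Mall(\edot) - (0,\data_k)}^2 + \al_k \qreg(\edot)$, optimality gives
\[
 \xi_k \coloneqq - \Mall^* (\Mall \signal_k - (0,\data_k))/\al_k \in \partial \qreg(\signal_k) \,.
\]
Setting $w_k \coloneqq -(\Mall \signal_k - (0,\data_k))/\al_k \in \Hr \times \Y$ we have $\xi_k = \Mall^* w_k \in \ran(\Mall^*)$. The hypothesis $\snorm{\Mall \signal_k - (0,\data_k)} = \mathcal O(\delta_k)$ combined with $\al_k \geq C \delta_k$ yields $\snorm{w_k} = \mathcal O(1)$, so $(w_k)$ is bounded in the Hilbert space $\Hr \times \Y$.

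Next I would pass, via reflexivity, to a weakly convergent subsequence $w_{k_j} \rightharpoonup w$. Weak continuity of $\Mall^*$ then delivers $\xi_{k_j} = \Mall^* w_{k_j} \rightharpoonup \Mall^* w \in \ran(\Mall^*)$, and it remains to check $\Mall^* w \in \partial \qreg(\signal_\star)$. Starting from the subgradient inequality
\[
 \qreg(z) \geq \qreg(\signal_{k_j}) + \inner{\xi_{k_j}}{z - \signal_{k_j}} \quad \text{for all } z,
\]
I would split the inner product as $\inner{\xi_{k_j}}{z - \signal_\star} + \inner{\xi_{k_j}}{\signal_\star - \signal_{k_j}}$. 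The first piece tends to $\inner{\Mall^* w}{z - \signal_\star}$ by weak convergence, while boundedness of $(\xi_{k_j})$ together with $\snorm{\signal_k - \signal_\star} \to 0$ forces the cross term to vanish. Combined with $\liminf_{j\to\infty} \qreg(\signal_{k_j}) \geq \qreg(\signal_\star)$ from wlsc, taking $\liminf$ gives $\qreg(z) \geq \qreg(\signal_\star) + \inner{\Mall^* w}{z - \signal_\star}$ for every $z$, i.e.\ $\Mall^* w \in \partial \qreg(\signal_\star)$.

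The main obstacle is this weak--strong passage to the limit in the subdifferential: the argument crucially exploits the assumed strong convergence $\signal_k \to \signal_\star$ in order to eliminate the cross term (weak convergence alone would not suffice against the only-weakly-convergent $\xi_{k_j}$), and wlsc of $\qreg$ to control the functional value in the limit. The remaining pieces --- optimality, the uniform bound on $w_k$, and weak compactness --- are routine.
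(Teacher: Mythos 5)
Your argument is correct and is essentially the proof the paper delegates to \cite[Lemma 4.1]{Gr11}: extract $\xi_k=\Mall^*w_k\in\partial\qreg$ at the minimizer from the optimality condition, bound $w_k$ via the residual estimate and $\al_k\geq C\delta_k$, pass to a weak subsequential limit $w$, and close the subgradient inequality using the strong convergence of the minimizers and lower semicontinuity of $\qreg$. In particular, you correctly avoid the non-closedness of $\ran(\Mall^*)$ by taking the weak limit of the $w_k$ first and only then applying $\Mall^*$.
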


\begin{proof}
See   \cite[Lemma 4.1]{Gr11}. The proof given there also applies to our situation. 
\end{proof}

\subsection{Main result}

The following theorem  in the main results of this section  and shows that the source condition and restricted injectivity are in fact necessary for linear convergence.   

\begin{theorem}[Converse results] \label{thm:converse}
Let  $(\signal_\star,\hid_\star, \data_\star) \in \X \times \Hr \in \Y$ satisfy $\Ao \hid_\star = \data_\star$ and $\Wo \signal_\star = \hid_\star$ and let Assumption~\ref{ass:strong} hold. Then the following statements are
equivalent:
\begin{enumerate}[label=(\roman*)]
\item $(\signal_\star,\hid_\star,\data_\star)$ satisfies Condition \ref{ass:strong}. \label{thm:char1}
\item $(\signal_\star,\hid_\star,\data_\star)$ satisfies  Condition \ref{ass:relaxed}.\label{thm:char2}
\item\label{Tiii}
 $\exists  \xi \in \partial \reg(\signal_\star) $ $\forall C >0$   $ \exists c_1, c_2 > 0 \colon $
 For $\al = C \delta$, $\snorm{\data^\delta-\data_\star} \leq \delta$,  and  $(\signal_\al^\delta,\hid_\al^\delta) \in \argmin \relaxed_{\alpha,\data^\delta}$ we have   
\begin{align}
\bregman_{\xi}^\reg (\signal_\al^\delta,\signal_\star) & \leq c_1 \delta  \label{eq:D1}\\ 
\snorm{\hid_\al^\delta - \hid_\star} & \leq c_2 \delta \,. \label{Ubound_h}
\end{align}

\item $(\signal_\star,\hid_\star)$ is the unique $\qreg$-minimizing solution of $\Mall(\signal,\hid) = (0,\data_\star)$ and $\forall C >0$ $\exists c_3 , c_4> 0$ with
\begin{align}
\snorm{\Ao \hid^\delta_\al -\data_\star} & \leq c_3 \delta \label{rateA}\\
\snorm{\Wo \signal_\al^\delta -\hid^\delta_\al} & \leq c_4 \delta  \label{rateW}
\end{align}
for  $\snorm{\data-\data_\star} \leq \delta$,  $\al = C \delta$, $(\signal_\al^\delta,\hid_\al^\delta) \in \argmin \relaxed_{\al,\data^\delta}$.\label{Tiv}
\end{enumerate}
\end{theorem}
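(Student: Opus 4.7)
We establish the four-fold equivalence via the cycle (i)$\Rightarrow$(ii)$\Rightarrow$(iii)$\Rightarrow$(iv)$\Rightarrow$(i). The first two links are essentially bookkeeping, while the substantive closure (iv)$\Rightarrow$(i) reduces everything to the auxiliary Lemmas~\ref{lem:ness2} and \ref{lem:converse1} after passing through the product formulation of Remark~\ref{rem:product}. For (i)$\Rightarrow$(ii), set $\eta \coloneqq \Ao^* v - u$. The inclusion $\eta \in \partial \snorm{\hid_\star}_1$ forces $\sabs{\inner{\phi_\la}{\eta}} = 1$ on $\supp(\hid_\star)$, while \ref{ass:strong4} gives the strict inequality off the support, so $\Omega[\eta] = \supp(\hid_\star)$. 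Hence \ref{ass:strong5} coincides with \ref{ass:relaxed4} and the two source conditions agree verbatim. (ii)$\Rightarrow$(iii) is Theorem~\ref{thm:relaxed} applied with $\xi = \Wo^* u$.

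\textbf{(iii)$\Rightarrow$(iv).} Plugging $\data^\delta = \data_\star$, $\delta = 0$ into (iii) shows that every $\qreg$-minimizing solution of $\Mall(\signal, \hid) = (0, \data_\star)$ has vanishing Bregman distance from $\signal_\star$ and agrees with $\hid_\star$ in norm; strict convexity of $\reg$ together with the constraint $\hid = \Wo \signal$ then delivers uniqueness of the $\qreg$-minimizer. The bound \eqref{rateA} follows from $\snorm{\Ao \hid_\al^\delta - \data_\star} \leq \snorm{\Ao} \snorm{\hid_\al^\delta - \hid_\star}$ combined with \eqref{Ubound_h}. The remaining residual bound \eqref{rateW} is obtained most transparently by first deducing Condition~\ref{ass:strong} from (iii) along the lines of (iv)$\Rightarrow$(i) below — whose strong-convergence hypothesis follows from uniqueness and strict convexity, and whose $O(\delta)$ residual hypothesis follows from \eqref{Ubound_h}, \eqref{rateA}, and a direct optimality computation using the Bregman bound — and then invoking Theorem~\ref{thm:relaxed}: the source element produced by Lemma~\ref{lem:ness2} together with Lemma~\ref{lem:var} in the product formulation furnishes $\snorm{\Mall(\signal_\al^\delta, \hid_\al^\delta) - (0, \data^\delta)} = O(\delta)$, whose first component is precisely \eqref{rateW}.

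\textbf{(iv)$\Rightarrow$(i).} The rates \eqref{rateA}, \eqref{rateW} together imply $\snorm{\Mall(\signal_\al^\delta, \hid_\al^\delta) - (0, \data^\delta)} = O(\delta)$. Uniqueness of the $\qreg$-minimizer, combined with strict convexity of $\reg$ and the standard convergence theory of variational regularization, yields the strong convergence $(\signal_\al^\delta, \hid_\al^\delta) \to (\signal_\star, \hid_\star)$ in $\X \times \Hr$. Applying Lemma~\ref{lem:ness2} in the product formulation produces $(u, v) \in \Hr \times \Y$ with $\Mall^*(u, v) \in \partial \qreg(\signal_\star, \hid_\star)$; by the product structure this decomposes as $\Wo^* u \in \partial \reg(\signal_\star)$ and $\Ao^* v - u \in \partial \snorm{\hid_\star}_1$, which are \ref{ass:strong2} and \ref{ass:strong3}. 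Lemma~\ref{lem:converse1} then upgrades this source pair to one additionally satisfying the strict off-support inequality \ref{ass:strong4} and the restricted injectivity \ref{ass:strong5}, thereby closing the cycle at Condition~\ref{ass:strong}.

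\textbf{Main obstacle.} The pivotal structural input is Lemma~\ref{lem:ness2}, which extracts a source element from rate information in the product setting, together with Lemma~\ref{lem:converse1}, which sharpens it to the strict Condition~\ref{ass:strong} form. The main technical subtlety lies in (iii)$\Rightarrow$(iv): the $O(\delta)$ bound on $\snorm{\Wo \signal_\al^\delta - \hid_\al^\delta}$ is not accessible from the Bregman bound alone — direct manipulation of the optimality inequality delivers only an $O(\sqrt{\delta})$ estimate — so one must route through (i) to gain access to source elements $(u, v) \in \ran(\Wo^*) \cap \partial \reg(\signal_\star)$ (etc.), after which the sharp rate is recovered by the product-space version of Lemma~\ref{lem:var} used in the proof of Theorem~\ref{thm:relaxed}.
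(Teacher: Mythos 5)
Your overall architecture --- the cycle (i)$\Rightarrow$(ii)$\Rightarrow$(iii)$\Rightarrow$(iv)$\Rightarrow$(i), with (ii)$\Rightarrow$(iii) delegated to Theorem~\ref{thm:relaxed} and the closure (iv)$\Rightarrow$(i) delegated to Lemma~\ref{lem:ness2} followed by Lemma~\ref{lem:converse1} --- is exactly the paper's, and your handling of (i)$\Rightarrow$(ii) (identifying $\Omega[\Ao^*v-u]=\supp(\hid_\star)$ via \ref{ass:strong4}) is, if anything, more explicit than the paper's. The genuine gap is in the step you yourself single out: the bound \eqref{rateW} in (iii)$\Rightarrow$(iv). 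Your proposed route is circular. Lemma~\ref{lem:ness2} requires $\snorm{\Mall(\signal_k,\hid_k)-(0,\data_k)}=\mathcal{O}(\delta_k)$, whose first component is $\snorm{\Wo\signal_k-\hid_k}=\mathcal{O}(\delta_k)$ --- precisely \eqref{rateW}, the estimate you are trying to prove. You assert this residual hypothesis ``follows from \eqref{Ubound_h}, \eqref{rateA}, and a direct optimality computation,'' yet your closing paragraph concedes that the direct computation yields only $\mathcal{O}(\sqrt{\delta})$. The two halves of your argument are therefore inconsistent: either the direct computation already gives $\mathcal{O}(\delta)$ (and the detour through (i) is superfluous), or it gives only $\mathcal{O}(\sqrt{\delta})$ (and Lemma~\ref{lem:ness2} is not applicable), and in the latter case the detour never gets off the ground.

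The paper closes this step without any detour, by a quadratic self-improvement argument that your write-up is missing. Comparing $\relaxed_{\al,\data^\delta}(\signal_\al^\delta,\hid_\al^\delta)$ with the competitor $(\signal_\star,\hid_\al^\delta)$ --- changing only the first argument, so that the data-fit term $\snorm{\Ao\hid_\al^\delta-\data^\delta}^2$ and the penalty $\snorm{\hid_\al^\delta}_1$ cancel --- gives, with $t\coloneqq\snorm{\Wo\signal_\al^\delta-\hid_\al^\delta}$, the inequality $t^2\leq\snorm{\hid_\star-\hid_\al^\delta}^2+2\al\bigl(\reg(\signal_\star)-\reg(\signal_\al^\delta)\bigr)\leq(c_2\delta)^2+2\al\bigl(\reg(\signal_\star)-\reg(\signal_\al^\delta)\bigr)$. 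Writing $\reg(\signal_\star)-\reg(\signal_\al^\delta)=-\bregman_{\Wo^*u}^{\reg}(\signal_\al^\delta,\signal_\star)-\inner{u}{\Wo\signal_\al^\delta-\hid_\star}$ for a source element $\Wo^*u\in\partial\reg(\signal_\star)$, discarding the nonnegative Bregman term, and splitting $\inner{u}{\Wo\signal_\al^\delta-\hid_\star}$ through $\hid_\al^\delta$ via \eqref{Ubound_h} bounds the defect by $\snorm{u}\,t+\snorm{u}c_2\delta$, hence $t^2-2\al\snorm{u}\,t-2\al\snorm{u}c_2\delta-(c_2\delta)^2\leq 0$. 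This factors as $(t+c_2\delta)(t-2\al\snorm{u}-c_2\delta)\leq 0$ and yields $t\leq(2C\snorm{u}+c_2)\delta$. The key idea is that the defect $\reg(\signal_\star)-\reg(\signal_\al^\delta)$ is controlled \emph{linearly in $t$ itself}, so the quadratic inequality upgrades the naive $\mathcal{O}(\sqrt{\delta})$ to the linear rate directly; only after \eqref{rateW} is in hand does the argument pass to Lemma~\ref{lem:ness2} for the implication (iv)$\Rightarrow$(i).
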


\begin{proof}
Item \ref{thm:char1} obviously implies Item \ref{thm:char2}.
The implication \ref{thm:char2} $\Rightarrow$ \ref{Tiii} has been shown in Theorem \ref{thm:relaxed}. 
The rate in \ref{Tiii} implies that $\hid_\star$ is the second component of every $\qreg$-minimizing solution of  $\Mall(\signal,\hid)=(0,\data_\star)$. As  $\reg$ is strictly convex,  \eqref{eq:D1} implies that  $(\signal_\star,\hid_\star)$ is the unique $\qreg$-minimizing solution of $\Mo (\signal, \hid) = (0,\data_\star)$.  The rate in \eqref{rateA} follows trivially from \eqref{Ubound_h}, since $\Ao$ is linear and bounded.  To prove  \eqref{rateW} we choose $u \in \Hr$ with  $\partial \reg(\signal_\star) = \Wo^* u$ and proceed similar as in the  proof of Lemma \ref{lem:var}. Because $(\signal_\al^\delta,\hid_\al^\delta) \in \argmin \relaxed_{\al,\data^\delta}$,  
\begin{multline*}
\snorm{\Wo \signal_\al^\delta  - \hid^\delta_\al}^2 + \snorm{ \Ao \hid^\delta_\al - \data^\delta }^2 + 2\al  \reg(\signal_\al^\delta) + 2\al  \snorm{\hid^\delta_\al}_1 \\ 
\leq \snorm{\hid_\star -\hid^\delta_\al}^2 + \snorm{ \Ao \hid^\delta_\al - \data^\delta }^2 + 2 \al \reg (\signal_\star) + 2 \al \snorm{\hid^\delta_\al}_1
\end{multline*}
and therefore 
\begin{equation*}
\snorm{\Wo \signal_\al^\delta -\hid^\delta_\al}^2 \leq (c_2 \delta)^2 + 2\al ( \reg(\signal_\star)-\reg(\signal_\al^\delta))  \,.
\end{equation*}
By the definition of the Bregman distance
\begin{align*}
&\reg(\signal_\star) - \reg(\signal_\al^\delta)  
\\ 
&\quad \leq -D_{\Wo^*u}^\reg(\signal_\al^\delta,\signal_\star) -\inner{u}{\Wo \signal_\al^\delta - \hid_\star}  \\
&\quad \leq -D_{\Wo^*u}^\reg(\signal_\al^\delta,\signal_\star) + \snorm{u}\snorm{\Wo \signal_\al^\delta - \hid_\star} \\
&\quad \begin{multlined}[0.7\linewidth] \leq -D_{\Wo^*u}^\reg(\signal_\al^\delta,\signal_\star) + \snorm{u}\snorm{\Wo \signal_\al^\delta - \hid^\delta_\al}
 + \snorm{u} c_2\delta. \end{multlined}
\end{align*}
Since the Bregman distance is nonnegative, it follows 
\begin{align*}
& \begin{multlined}[0.8\linewidth] 0 \geq \snorm{\Wo \signal_\al^\delta -\hid^\delta_\al}^2 - 2 \al \snorm{u} \snorm{\Wo \signal_\al^\delta -\hid^\delta_\al} 
-2 \al \snorm{u} c_2 \delta - (c_2 \delta)^2  \end{multlined} \\
& \; \, \begin{multlined}[0.8\linewidth] = \left(\snorm{\Wo \signal_\al^\delta -\hid^\delta_\al}+c_2 \delta\right)  
\cdot \left(\snorm{\Wo \signal_\al^\delta -\hid^\delta_\al} -2 \al \snorm{u} - c_2 \delta \right) \end{multlined}
\end{align*}
and hence $\snorm{\Wo \signal_\al^\delta -\hid^\delta_\al} \leq (2C \snorm{u} + c_2 ) \delta$. 

Now let \ref{Tiv} hold and $\snorm{\data-\data_k} \leq \delta_k$, $\delta_k \to 0$. Choose $\al_k=C \delta_k$ and  $(\signal_k,\hid_k) \in \argmin \relaxed_{\al_k,\data_k}$.
The uniqueness of $(\signal_\star,\hid_\star)$ implies $\snorm{(\signal_k,\hid_k) - (\signal_\star, \hid_\star)} \to 0$ as $k \to \infty$, see \cite[Prop. 7]{Gr08}. Moreover, 
\begin{equation*}
	\snorm{(\Wo \signal_k - h_k, \Ao h_k -\data_k)}  \leq \snorm{\Wo \signal_\al^\delta -h_k}  
	+ \snorm{\Ao h_k -\data_k}  \leq (c_4+c_3+1)\delta_k  \,.
\end{equation*}
Lemma \ref{lem:ness2} implies $\ran(\Mall^*) \cap \partial \reg(\signal_\star,\hid_\star) \neq \emptyset$, which means that there exists $(u,v) \in \Hr \times \Y$ such that $\Wo^* u \in \partial \reg(\signal_\star)$ and $\Ao^* v-u \in \partial \snorm{\hid_\star}_1$.  Proposition \ref{lem:converse1} finally shows that Condition \ref{ass:strong} holds, which concludes the proof.
\end{proof}

\subsection{Numerical example}

We consider recovering a function from CS measurements of its primitive. The aim of this elementary  example  is to point out possible  implementation of the two proposed models and supporting the linear error estimates. Detailed comparison  with other  methods and figuring out limitations of each method is subject  of future research.

\begin{figure}[htb!]
\centering
\includegraphics[width=0.4\columnwidth]{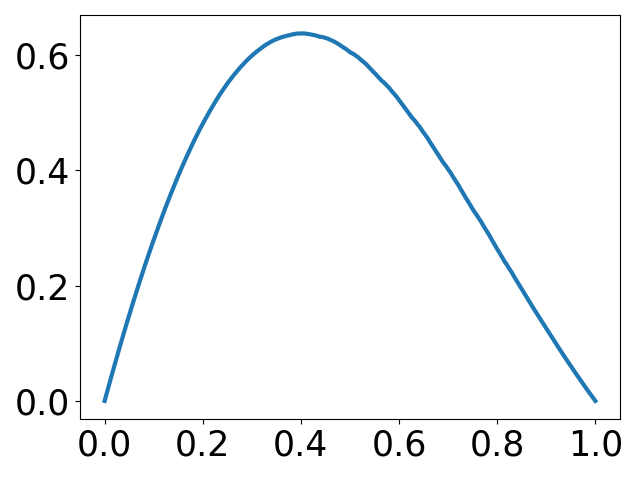} 
\includegraphics[width=0.4\columnwidth]{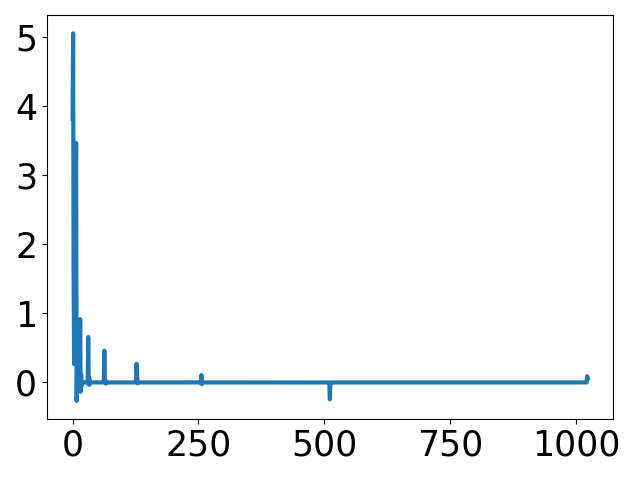}
\\[0.5em]
\includegraphics[width=0.4\columnwidth]{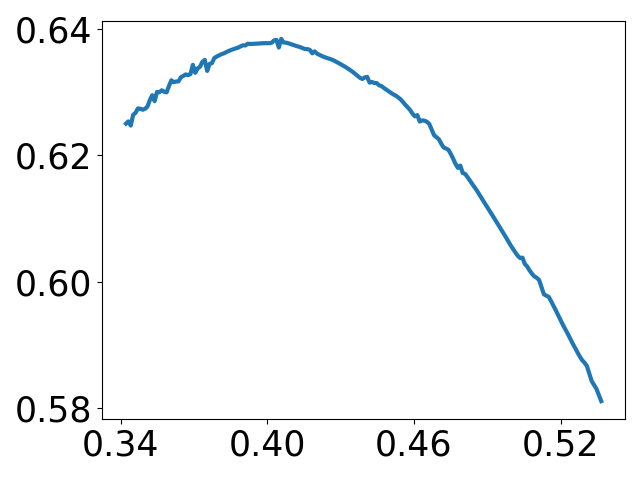} 
\includegraphics[width=0.4\columnwidth]{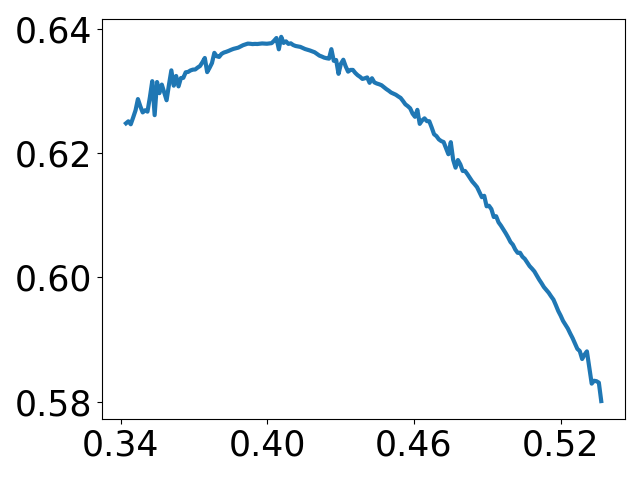} 
\\[0.5em]
\includegraphics[width=0.4\columnwidth]{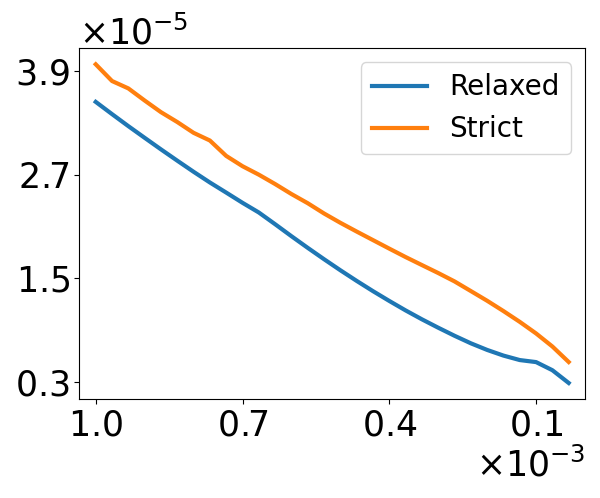}  
\includegraphics[width=0.4\columnwidth]{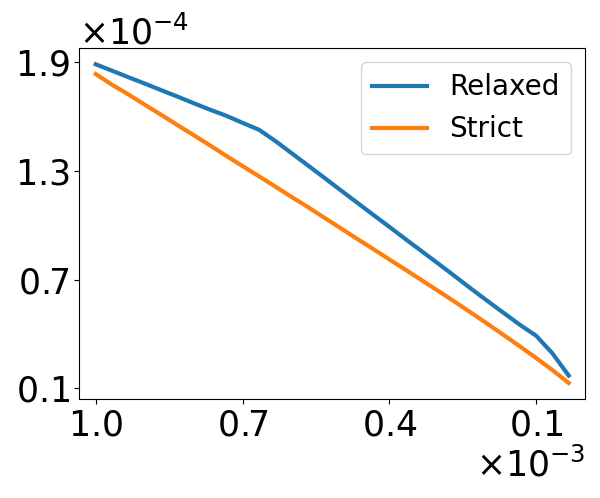}
\caption{Top: $\signal_\star$ (left) and $\Wo \signal_\star$ (right). Middle: Reconstruction using relaxed  (left) and  strict (right) $\ell^1$ co-regularization. Bottom:  $ \snorm{\signal_\star- \signal_\alpha^\delta}^2/2$ (left) and $\snorm{\hid_\star-\hid_\alpha^\delta}$ (right) as a functions of noise level $\delta$.}\label{fig:rec}
\end{figure}

The discrete operator  $\Wo \colon \R^N \to \R^N$ is taken as a discretization of the integration  operator $L^2[0,1] \to L^2[0,1] \colon f \mapsto  \int_0^t f$.  The CS  measurement matrix $\Ao \colon  \R^{m \times N}$   is taken  as random Bernoulli matrix with entries $0,1$.   We apply  strict  and relaxed $\ell^1$ co-regularization  with $\reg = \snorm{\edot}^2/2$, $(\phi_\la)_{\la \in \La}$   as Daubechies wavelet ONB with two vanishing moments and $\kappa_\la =1$. For minimizing the relaxed $\ell^1$ co-regularization functional we apply the Douglas-Rachford algorithm  \cite[Algorithm 4.2]{Pe11} and for strict $\ell^1$ co-regularization we apply the  ADMM algorithm \cite[Algorithm 6.4]{Pe11} applied to the constraint formulation  $ \argmin_{\signal, \hid} \{ \snorm{\Ao\Wo \signal - \data^\delta}^2/2 + \al \snorm{\signal}^2/2 + \al \snorm{\hid}_1 \mid  \Wo \signal = \hid \}$.

Results are shown in Figure \ref{fig:rec}.  The top row depicts the targeted signal $\signal_\star \in \R^N$ (left) for which  $\Wo \signal_\star$ is sparsely  represented by $(\phi_\la)_{\la \in \La}$ (right).  The  middle row shows reconstructions using the  strict and the relaxed co-sparse regularization from noisy data $\snorm{\data-\data_\star} \leq   10^{-5}$.  The bottom row plots $\bregman_{\signal_\star}^\reg(\signal_\al^\delta,\signal_\star) $ and  $\snorm{\hid^\delta_\al - \Wo \signal_\star}$ as  functions  of the noise  level. Note that  for  $\reg = \snorm{\edot}^2/2$ the Bregman distance  is given by $\bregman_{\signal_\star}^\reg(\signal_\al^\delta,\signal_\star) =  \snorm{\signal_\al^\delta - \signal_\star}^2/2$. Both error plots show a linear convergence rate  supporting Theorems~\ref{thm:relaxed},~\ref{thm:strict}, \ref{thm:converse}.

\section{Conclusion}

While the theory of CS  on direct data  is well developed, this by far not the case when compressed measurements are made on indirect data.  For that purpose, in this paper we study CS  from indirect data written as composite problem $\data^\delta = \Ao \Wo \signal_\star + z^\delta$ where $\Ao$ models the CS measurement operator and $\Wo$ the forward model generating indirect data and depending on the application at hand.  For signal reconstruction we have proposed two novel reconstruction methods, named relaxed and  strict $\ell^1$ co-regularization, for jointly estimating $\signal$ and $\hid_\star = \Ao \signal$. Note that  the  main conceptual difference  between the proposed  method over standard CS is that we use the $\ell^1$ penalty for indirect data $\Wo \signal_\star$ instead of  $\signal_\star$ together with another penalty for $\signal_\star$ accounting for the inversion of $\Ao$, and jointly recovering both unknowns.       

As main results for both reconstruction models we derive linear error estimates under source conditions and restricted injectivity (see Theorems~\ref{thm:strict},~\ref{thm:relaxed}).  Moreover, conditions have been shown to be even necessary to obtain  such results (see Theorem~\ref{thm:converse}).  Our results have been illustrated  on a simple numerical  example  for combined CS and numerical differentiation.  In future work further  detailed numerical  investigations are in order   comparing our models with standard CS approaches in practical important applications demonstrating strengths and limitations of different  methods. Potential  applications  include magnetic resonance imaging \cite{block2007undersampled,lustig2007sparse} or photoacoustic tomography \cite{sandbichler2015novel,provost2008application}.

\section*{Acknowledgment}

The presented work of A.E. and M.H. has been supported of the Austrian Science Fund (FWF), project P 30747-N32.

\end{document}